\newtheorem{theorem}{Theorem}
\newtheorem{lemma}{Lemma}
\newtheorem{problem}{Problem}
\newtheorem{definition}{Definition}
\newtheorem{remark}{Remark}
\newtheorem{property}{Property}
\tikzstyle{block} = [draw,rectangle, rounded corners, minimum width=1cm, minimum height=0.8cm,text centered, line width=2pt ]
\tikzstyle{arrow} = [thick,->,>=stealth,line width=2pt]
\tikzset{cross/.style={cross out, draw=black, minimum size=2*(#1-\pgflinewidth), inner sep=0pt, outer sep=0pt},
cross/.default={1pt}}
\tikzset{
  shift left/.style ={commutative diagrams/shift left={#1}},
  shift right/.style={commutative diagrams/shift right={#1}}
}
\newcommand\rsmraise[1]{%
  \ifx#1\displaystyle .8\else
    \ifx#1\textstyle .8\else
      \ifx#1\scriptstyle .6\else
        .45%
      \fi
    \fi
  \fi}
\tikzstyle{block} = [draw,rectangle, rounded corners, minimum width=1cm, minimum height=0.8cm,text centered, line width=2pt ]
\tikzstyle{arrow} = [thick,->,>=stealth,line width=2pt]
\tikzset{
    addarrow/.style={decoration={markings, mark=at position 1 with {\arrow{stealth}}},
                     postaction={decorate}}
}
\newcommand{\rg}[1]{\left[\left[#1\right]\right]}		
\newcommand{\rgc}[2]{\left[\left[#1\,\middle\vert\,#2\right]\right]}		
\begin{document}

\title{
 Worst-case Guarantees for Remote Estimation of an Uncertain Source
}

\author{Mukul Gagrani, Yi Ouyang, Mohammad Rasouli and Ashutosh Nayyar
\thanks{This work was supported by NSF Grant ECCS 1509812, CNS 1446901 and ECCS 1750041.}
}

\date{\today}
\maketitle
%

\begin{abstract}
Consider a remote estimation problem where a sensor wants to communicate the state of an uncertain source to a remote estimator over a finite time horizon. The uncertain source is modeled as an autoregressive process with bounded noise.  Given that the sensor has a limited communication budget, the sensor must decide when to transmit the state to the estimator who has to produce real-time estimates of the source state. In this paper, we consider the problem of finding a scheduling strategy for the sensor and an estimation strategy for the estimator to jointly minimize the worst-case maximum instantaneous estimation error over the time horizon. This leads to a decentralized minimax decision-making problem. We obtain a complete characterization of optimal strategies for this decentralized minimax problem. In particular, we show that an open loop communication scheduling strategy is optimal and the optimal estimate depends only on the most recently received sensor observation.  

\end{abstract}

\section{Introduction}
\label{sec:intro}


Information collection is essential for most engineering systems. 
In many applications, sensors are deployed to collect and send information to a base station/control center to estimate or control the state of the system.
In environmental monitoring, for example, remote sensors are used to measure environmental variables such as temperature, rainfall, soil moisture, etc.
The sensors collect information and transmit it to the base station through wireless communication. For a sensor with limited battery, 
the energy spent in communication is a significant factor determining the battery lifespan.
Since battery replacement is expensive for remote sensors, it is important for  sensors to adopt a transmission schedule that preserves energy while achieving a desired level of estimation accuracy.
Similar scenarios of remote estimation also arise in other applications such as  smart grids, networked control systems and healthcare monitoring \cite{ap1,ap2,kiran2014adaptive}.

The remote estimation problem with one sensor and one estimator has been studied under two different communication models:
 i) \emph{Remote estimation with pull communication protocol}: In this class of problems the estimator decides when to get data from the sensor. Since the estimator is the only decision-maker in the system, this protocol leads to a centralized sequential decision-making problem. Instances of such problems have been studied in \cite{athans,baras,wu,naghshvar}.
ii) \emph{Remote estimation with push communication protocol}: Here, the sensor makes the decision about when to send data to the estimator. The estimator decides, at each time, what estimate to produce. This leads to a decentralized decision-making problem with the sensor and the estimator as the two decsion-makers. Computing jointly optimal scheduling and estimation strategies in a decentralized setup is difficult in general. However, several works have addressed this problem by placing some restrictions on the transmission/estimation strategies and/or by making certain assumptions about the source statistics. For example, \cite{imer2005optimal,imer2010optimal} studied the problem of remote estimation under limited number of transmissions when the state process is i.i.d. and the transmission strategy is restricted to be threshold-based.  A continuous-time version of the problem is considered in  \cite{rabi2012adaptive} with a Markov state process, limited number of transmissions and a fixed estimation strategy.  \cite{xu2004optimal} derived the optimal communication schedule assuming a Kalman-like estimator. Jointly optimal scheduling and estimation strategies were derived in \cite{lipsa,nayyar2013optimal,chakravorty} for Markov sources that satisfied certain symmetry assumptions on their probability distributions. 

The uncertainties in all the aforementioned work are modeled as random variables and the objective is to minimize the expected sum cost over a finite time horizon. However, in many applications, there is no statistical model for the system variables of interest. Furthermore, guarantees on estimation accuracy at each time instant may be critical for safety concerned systems such as healthcare monitoring. For example, while monitoring the heartbeat of a patient it is desirable that the estimation error at each time is minimal. 

In this paper, we consider an uncertain source that can be modeled as a discrete-time autoregressive process with bounded noise. The source is observed by a sensor with limited communication budget. The sensor can communicate with a remote estimator that needs to produce real-time estimates of the source state.  Given such a model, we are interested in the worst-case guarantee on estimation error at any time that can be achieved under a limited communication budget. Put another way, we want to find the minimum communication budget needed to ensure that the worst-case estimation error at any time is below a given threshold.  In order to address these questions, we consider a minimax formulation of the remote estimation problem. Our goal is to design a communication scheduling strategy for the sensor and an estimation strategy for the estimator to jointly minimize the worst-case instantaneous estimation cost over all realizations of the source process.

Centralized decision and control problems where the goal is to minimize a worst-case cost have long been studied in the literature. One prominent line of work has focused on developing dynamic program type approaches for minimax problems \cite{bertsekas1973,witsenhausen1966,witsenhausen1968,baras1994robust,coraluppi1999risk}. These centralized minimax dynamic programs use analogues of stochastic dynamic programming concepts such as information states and value functions. The centralized minimax dynamic program can be interpreted in terms of a zero-sum game between the controller and an adversary who selects the disturbances to maximize the cost metric \cite{bertsekas1973,witsenhausen1966}. Dynamic games based approaches for minimax design problems were also studied in \cite{basar1}. Minimax problems where the goal is to minimize the worst-case maximum instantaneous cost were studied in \cite{bernhard1995expected,bernhard2000,bernhard2003minimax}.

In the centralized minimax problems described above, the uncertainties are described in terms of the set of values they can take. In contrast, some minimax problems have looked at systems with stochastic uncertainties. In these problems, the parameters of the stochastic uncertainties are ambiguous. These parameters are either fixed apriori but unknown or they are chosen dynamically by an adversary. In either case, the objective of the control problem is to minimize the maximum \emph{expected} cost corresponding to the worst-choice of unknown parameters. Examples of this line of work include \cite{gonzalez2002, satia1973,Iyengar2005,Wiesemann2013, osogami2015}.

Our minimax problem is most closely related to the minimax control problems studied in \cite{bertsekas1973} and \cite{bernhard2000,bernhard2003minimax}. The minimax problems in \cite{bertsekas1973} and \cite{bernhard2000,bernhard2003minimax} were centralized decision-making problem involving a single decision-maker acting over time. In contrast, our minimax problem involves two decision-makers, the sensor and the estimator, making decisions based on different information. The decentralized nature of our decision problem creates issues such as signaling where decision-makers may communicate implicitly through their actions. A decision to not communicate by the sensor, for example, can implicitly convey some information about the source to the estimator. Such signaling effects are a key reason why the joint optimization of strategies becomes a difficult problem \cite{lipsa,nayyar2013optimal}. A class of decentralized minimax control problems with partial history sharing were investigated in \cite{gagrani}. 

In order to jointly optimize the strategies for the sensor and the estimator while taking into account the signaling between them, we extend the coordinator-based approach of \cite{nayyar2013decentralized}, \cite{nayyar2013optimal} which was developed for a stochastic model and expected cost criterion to our minimax setting. Using this, we explicitly identify optimal communication scheduling and estimation strategy for our minimax problem.

\emph{Organization}: We start with a general centralized minimax control problem in Section \ref{sec:central} and then formulate the minimax remote estimation problem in Section \ref{sec:problem_formulation}. We formulate an equivalent centralized minimax control problem in Section \ref{sec:coord} and derive the optimal scheduling and estimation strategies in Section \ref{sec:opt}. We conclude in Section \ref{sec:conclusion}. 

\emph{Notation and Uncertain Variables}:
 $X_{a:b}$ denotes the collection of variables $(X_{a},X_{a+1},\ldots,X_{b})$. $\mathbb{I}_A$ denotes the indicator function of an event $A$.

We now review the concept of uncertain variables as defined in \cite{nair}. An uncertain variable is a mapping from some underlying sample space $\Omega$ to a space of interest.
We use capital letters to denote uncertain variables while small letters denote their realizations and script letters denote the spaces of all possible realizations. For example, an uncertain variable $X$  has a realization $ X(\omega) =x\in \mathcal X$ for an outcome $\omega \in \Omega$. 

Instead of probability measures as in the case of random variables, uncertain variables can be analyzed using their ranges.
The range of $X$ is defined as $\rg{X} = \{X(\omega): \omega \in \Omega\}$. Similarly, for a collection of uncertain variables $X_1,\ldots,X_n$, $\rg{X_1,\ldots,X_n} = \{(X_1(\omega),\ldots,X_n(\omega)): \omega \in \Omega\}$.
The conditional range of $X$ given $Y=y$ is denoted by $\rgc{X}{y}$ (or $\rgc{X}{Y=y}$) and is defined as  $\{X(w): Y(w)=y, w \in  \Omega\}$. We also define the uncertain conditional range $\rgc{X}{Y}$ as an uncertain variable 
that takes the value $\rgc{X}{y}$ when $Y$ takes the value $y$. 

Using the ranges of uncertain variables, an analogue of statistical independence can be defined as follows [Definition 2.1 \cite{nair}].
\begin{definition}
Uncertain variables $X_1,X_2,\dots,X_n$ are \emph{unrelated} if 
\begin{align}
\rg{X_1,\dots,X_n} = \rg{X_1}\times \dots \times \rg{X_n}
\end{align}
where $\times$ is the Cartesian product.
\end{definition}

The following property comes from the definition of unrelated uncertain variables [Lemma 2.1 \cite{nair}].

\begin{property}
\label{prop:cond_indep}
If $X$ is unrelated to $(Y,Z)$, that is, $\rg{X,Y,Z} = \rg{X} \times \rg{Y,Z}$, then
\begin{align}
\rg{X,Y|Z} = \rg{X} \times \rg{Y|Z}. 
\end{align}
\end{property}

For a function $f(x)$, we define $\sup_X f(X) := \sup_{x \in \rg{X}} f(x)$ to denote its supremum over the range of $X$. Similarly $\sup_{X_{1:n}} f(X_{1:n}) := \sup_{x_{1:n} \in \rg{X_{1:n}}} f(x_{1:n})$. Also,  $\sup_{X|y} f(X) := \sup_{x \in \rgc{X}{y}} f(x)$ denotes the supremum of $f(x)$ over the conditional range.
For a bivariate function $f(x,y)$, we have the following property.
\begin{property}
\label{prop:smoothing}
If $X,Y$ are uncertain variables. Then
\begin{align}
\sup_{X,Y} f(X,Y)
=\sup_{x \in \rg{X}} \sup_{y \in \rg{Y|x}} f(x,y).
\end{align}
Furthermore, if $Z$ is another uncertain variable, then
\begin{align}
\sup_{(X,Y)|z} f(X,Y)
=\sup_{x \in \rg{X|z}} \sup_{y \in \rg{Y|x,z}} f(x,y).
\end{align}
\end{property}
Note that the above property is the analogue of the tower property of conditional expectation with supremum playing the role of expectation.

\section{Minimax Control with Maximum Instantaneous Cost Objective}
\label{sec:central}

Consider a discrete time system with state $S_t \in \mathcal{S}$ and observation $O_t \in \mathcal{O}$ evolving according to the following dynamics:
\begin{align}
& S_{t+1} = f_{t+1}(S_t,A_t,N_{t+1}),
\\
& O_{t+1} = h_{t+1}(S_{t},A_t,N_{t+1}),
\end{align}
where  $A_t$ is the control action, $N_t$ is the noise, $t \in \mathcal{T} = \{1,2,\dots,T\}$, $S_1 = N_1$  and $O_1 =h_1(S_1)$.
The noise process  $N=\{N_t, t=1,\ldots, T\}$ is a sequence of unrelated uncertain variables. We assume that the state has two components, $S_t = (S_t^h,S_t^o)$, where $S_t^h$ is the hidden part and $S_t^o \in \mathcal{S}^o$ is the observable part.

At each time $t$, the controller's available information is  $Q_t=(O_{1:t},S^o_{1:t},A_{1:t-1})$. Note that $Q_t$ includes the history of observations $O_{1:t}$, the history of observable part of the states $S^o_{1:t}$ and the past control actions $A_{1:t-1}$. $\mathcal Q_t$ denotes the set of all possible values of $Q_t$.
The set of available control actions at $t$, which may depend on the directly observable state $S^o_t$,  is $\mathcal A(S^o_t)$.
Based on the available information at $t$, the controller takes a control action according to a function $\eta_t: \mathcal Q_t \mapsto  \mathcal A(S^o_t)$ as 
\begin{align}
& A_t = \eta_t(Q_t).
\end{align}
We call $\eta=(\eta_1,\eta_2,\dots,\eta_T)$ a strategy of the controller. The instantaneous cost at time $t$ is $\rho_t(S_t,A_t)$.
The minimax control objective is to find a strategy $\eta$ that minimizes the worst-case maximum instantaneous cost. Thus, the strategy optimization problem is 
\begin{align}
 \inf_{\eta}\Big\{\sup_{N_{1:T}} \max_{t\in\mathcal{T}} \rho_t(S_t,A_t) \Big\}.
\end{align}


Let $\Pi_t = [[S_{t}^h|Q_t]]$ be the conditional range of the hidden part of the state $S_t^h$ given the available information $Q_t$. Let $\mathcal{B}$ denote the space of all possible $\Pi_t$. 
Note that $S^o_t$ belongs to $Q_t$, so conditional range of $S^o_t$ given $Q_t$ is the singleton set i.e. $[[S_t^o | Q_t]] = \{S^o_t\}$. 

The conditional range $\Pi_t $ along with $S_t^o$ can be used as an information state for decision-making in the minimax control problem. In particular, we can obtain the following dynamic programming result using arguments from \cite{bernhard2003minimax}. 

\begin{theorem}
\label{thm:centralized}
For each $t \in \mathcal{T}$, define functions $V^*_t: \mathcal{B} \times \mathcal{S}^o \mapsto \mathbb{R}$ as follows:\\
i) For $\pi_T \in \mathcal{B}, s_T^o \in \mathcal{S}^o$,
\begin{align}
V^*_T(\pi_T,s_T^o) &:= \inf_{a_T \in \mathcal A(s_T^o)} \sup_{s_T^h \in \pi_T} \rho_T((s_T^h,s_T^o),a_T) , \label{dp:centralized_1}
\end{align}
ii) For $t<T, \pi_t \in \mathcal{B}, s_t^o \in \mathcal{S}^o$,
\begin{align}
&V^*_t(\pi_t,s_t^o) \notag \\
&:= \inf_{a_t \in \mathcal{A} (s_t^o)}\Big\{\displaystyle\sup_{s_t^h \in \pi_t,n_{t+1} \in \rg{N_{t+1}}}
 \max\Big(\rho_t(s_t,a_t),V^*_{t+1}(\Pi_{t+1},S_{t+1}^o)\Big)  \Big\}. \label{dp:centralized_2}
\end{align}
where $\Pi_{t+1}$ is given as follows,
\begin{align*}
\Pi_{t+1} = \{s_{t+1}^h: s_{t+1} =  &f_{t+1}(s_t,a_t,n_{t+1}),h_{t+1}(s_t,a_t,n_{t+1}) = o_{t+1} \\
&s_t = (s_t^h,s_t^o), s_t^h \in \pi_t, n_{t+1} \in \rg{N_{t+1}} \}.
\end{align*}

If the infimum in \eqref{dp:centralized_1}, \eqref{dp:centralized_2} is achieved, then for each $\pi_t \in \mathcal{B}$ and $s_t^o \in \mathcal{S}^o$ the minimizing $a_t$ in \eqref{dp:centralized_1}-\eqref{dp:centralized_2} gives the optimal action at time $t$ for $t \in \mathcal{T}$. Moreover, the optimal cost is given by $\sup_{Q_1}V^*_1(\Pi_1,S_1^o)$.
\end{theorem}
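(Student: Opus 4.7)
The plan is to prove the theorem by backward induction on $t$, using an auxiliary cost-to-go function attached to each admissible strategy and showing that its infimum over strategies obeys the claimed Bellman-type recursion. The two workhorses will be Property \ref{prop:smoothing} (the tower property of $\sup$), the elementary identity $\sup_x \max(f(x),g(x))=\max(\sup_x f,\sup_x g)$, and the fact that, because the problem is centralized, the choice of $\eta_{t+1},\ldots,\eta_T$ can be made pointwise for each realization $q_{t+1}$ of future information, which enables a clean interchange of $\inf$ and $\sup$.

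\textbf{Setup and base case.} For any strategy $\eta$ and any $q_t \in \mathcal{Q}_t$, define the cost-to-go
\begin{align*}
J_t^{\eta}(q_t) := \sup_{(S_t^h,N_{t+1:T})\,|\,q_t}\ \max_{\tau \in \{t,\ldots,T\}} \rho_\tau(S_\tau,A_\tau),
\end{align*}
where along the sample path the actions are generated by $A_\tau = \eta_\tau(Q_\tau)$ and the states by the system dynamics. I would first argue that the conditional range $\rgc{(S_t^h,N_{t+1:T})}{q_t}$ equals $\pi_t \times \rg{N_{t+1:T}}$ because the $N_\tau$ are unrelated and $\pi_t=\rgc{S_t^h}{q_t}$ by definition; this uses Property \ref{prop:cond_indep} and lets me conclude that the infimum $\inf_\eta J_t^\eta(q_t)$ depends on $q_t$ only through the pair $(\pi_t,s_t^o)$. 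The claim to be proved by backward induction is
\begin{align*}
\inf_{\eta}\, J_t^{\eta}(q_t) = V_t^*(\pi_t,s_t^o).
\end{align*}
At $t=T$ the maximum reduces to $\rho_T(S_T,A_T)$, and since $A_T=\eta_T(q_T)$ can be chosen independently of the unknown $S_T^h$, the recursion \eqref{dp:centralized_1} is immediate.

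\textbf{Inductive step.} Assume the claim holds at $t+1$. Splitting the supremum with Property \ref{prop:smoothing} over the groups $(S_t^h,N_{t+1})$ and $N_{t+2:T}$, and pulling $\rho_t(S_t,A_t)$ outside the inner supremum (it depends on neither $N_{t+1}$ nor $N_{t+2:T}$), I can write
\begin{align*}
J_t^{\eta}(q_t) = \sup_{s_t^h\in\pi_t,\, n_{t+1}\in \rg{N_{t+1}}} \max\!\Big(\rho_t(s_t,a_t),\ J_{t+1}^{\eta}(Q_{t+1})\Big),
\end{align*}
using $\sup\max = \max\sup$ on the two arguments that are independent of the future noise. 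Taking the infimum over $\eta$ splits into an infimum over $\eta_t$ (which selects $a_t$ based on $q_t$) and an infimum over $\eta_{t+1:T}$. Because $\eta_{t+1:T}$ is a collection of functions and $J_{t+1}^{\eta}(Q_{t+1})$ for a fixed history $q_{t+1}$ depends on $\eta_{t+1:T}$ only through its values on successor histories of $q_{t+1}$, I can interchange $\inf_{\eta_{t+1:T}}$ with the outer $\sup$ over $(s_t^h,n_{t+1})$: for any $\epsilon>0$, pick a near-optimal $\eta_{t+1:T}$ for each $q_{t+1}$ and splice these choices into a single admissible strategy. Applying the induction hypothesis to the inner term then gives exactly \eqref{dp:centralized_2}, and the attaining $a_t$ (when it exists) is the optimal action.

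\textbf{Final step and main obstacle.} Since $Q_1 = (O_1,S_1^o)$ is itself an uncertain variable, Property \ref{prop:smoothing} gives $\sup_{N_{1:T}}\max_t \rho_t = \sup_{Q_1}\sup_{(S_1^h,N_{2:T})|Q_1}\max_t \rho_t$. Interchanging the outer $\sup_{Q_1}$ with $\inf_\eta$ by the same pointwise argument applied at $t=1$, and using the induction result, yields the optimal value $\sup_{Q_1} V_1^*(\Pi_1,S_1^o)$. The step I expect to require the most care is the $\inf$–$\sup$ interchange: it is only legitimate because this is a centralized problem where the controller observes all of $Q_\tau$, so there is no signaling between decision-makers forcing coupled strategy choices across histories — exactly the distinction the paper later exploits to motivate the coordinator-based reformulation for the decentralized remote-estimation setting.
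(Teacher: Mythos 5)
Your proof is essentially correct, and it reaches the dynamic program by a genuinely different route from the paper. The paper's Appendix \ref{app:pf_centralized} also works with a cost-to-go $V^{\eta}_t(q_t)$ and backward induction, but it only establishes the one-sided bound $V^*_t(q_t) \leq V^{\eta}_t(q_t)$ for every strategy $\eta$ (Lemma \ref{lm:valuefunctions}) and then concludes by verification: a strategy that attains the infimum at every stage achieves the bound, which is why the theorem carries the hypothesis that the infima are achieved. You instead prove the exact identity $\inf_{\eta} J^{\eta}_t(q_t) = V^*_t(\pi_t,s_t^o)$ by an $\inf$--$\sup$ interchange, splicing $\epsilon$-optimal continuation strategies chosen separately for each successor history $q_{t+1}$ into one admissible strategy. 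This buys a slightly stronger statement (the value-function characterization of the optimal cost holds even when the infima are not attained, so the ``achieved'' hypothesis is only needed to extract an optimal action, not the optimal value), at the price of the selection/splicing argument, which is legitimate here precisely because the problem is centralized with perfect recall --- your closing remark correctly identifies this as the step that fails in the decentralized setting. Two small points of care: (i) your displayed identity $J^{\eta}_t(q_t)=\sup_{s_t^h,n_{t+1}}\max(\rho_t, J^{\eta}_{t+1}(Q_{t+1}))$ is only valid after the outer supremum is taken, since for a fixed $(s_t^h,n_{t+1})$ the conditional range given $q_{t+1}$ can be strictly larger than that given $(q_t,s_t^h,n_{t+1})$; the paper avoids this by first conditioning on $Q_{t+1}$ via Property \ref{prop:smoothing} and then collapsing. (ii) To land exactly on \eqref{dp:centralized_2} you must also check that the set $\Pi_{t+1}$ written in the theorem statement coincides with the true conditional range $\rgc{S_{t+1}^h}{q_{t+1}}$; this is the content of the paper's Lemma \ref{lm:piupdate} (using Property \ref{prop:cond_indep} and the unrelatedness of $N_{t+1}$), and your proposal invokes unrelatedness only at stage $t$, leaving the update step implicit. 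Neither issue is fatal, but both deserve an explicit line.
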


\begin{proof}
See Appendix \ref{app:pf_centralized}
\end{proof}

\section{Problem Formulation}
\label{sec:problem_formulation}



\begin{figure}[h]
\begin{center}
\begin{tikzpicture}

\node (Xt) at (-3.5,0){$X_t$};

\node [rectangle,draw,minimum width=1.8cm,minimum height=1cm,line width=1pt,rounded corners]
		at (-2,0) (T) {Sensor}; 
\node [rectangle,draw,minimum width=1.8cm,minimum height=1cm,line width=1pt,rounded corners]
		at (2.5,0) (R) {Estimator}; 

\node (hXt) at (4,0){$\hat X_t$};
		
\coordinate (midpoint) at (1,0);
\node[above] at (midpoint) {$Y_t$};

\draw [line width=1pt] (T.east) to[cspst, l=$U_t$] (midpoint);
\draw [thick,->,>=stealth,line width=1pt] (midpoint) -- (R.west);

\draw [thick,->,>=stealth,line width=1pt] (Xt)--(T) ;
\draw [thick,->,>=stealth,line width=1pt] (R) -- (hXt)  ;

\end{tikzpicture}
\end{center}
\caption{Remote Estimation setup}
\label{fig:systemmodel}
\end{figure}
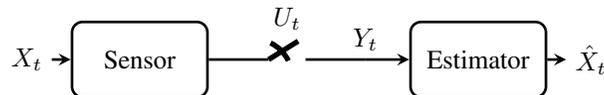

Consider a communication problem between a sensor (transmitter) and an estimator (receiver) over a finite time horizon $\mathcal T = \{1,2,\dots,T\}$, $T \geq 1$. The sensor perfectly observes a discrete-time uncertain process $X_t \in \mathbb{R}^n$ which evolves according to the following dynamics
\begin{equation}\label{eq:Xt}
X_{t+1} = \lambda A X_t + N_{t+1},
\end{equation}
where $\lambda$ is a scalar and $A$ is an orthogonal matrix. $N_t$ is an uncertain variable which lies in the ball of radius $a_t$ around the origin i.e. $||N_t|| \leq a_t$. We assume that the initial state $X_1 = N_1$. The numbers $a_1,\ldots,a_T$ are finite. Since all the noise in the system is bounded, the state $X_t$ also remains bounded for all $t$. Let $\mathcal{X} \subset \mathbb{R}^n$ denote a bounded set such that $X_t \in \mathcal{X}$ for $t \in \mathcal{T}$. 

The sensor can send the observed state to the estimator through a perfect channel. However, each transmission consumes one unit of sensor's energy, and the sensor has a limited energy budget of  $K$ units\footnote{$K$ is a fixed known integer and not an uncertain variable} with $1 \leq K < T$. Let $E_t$ denote the energy available at time $t$. We use $U_t$ to denote the transmission decision at time $t$. $U_t$ is $1$ if the current state observation is transmitted and $0$ otherwise. Note that $U_t \in \mathcal{U}(E_t)$ where $\mathcal{U}(E_t) = \{0,1\}$ if $E_t > 0 $ and $\mathcal{U}(0) = \{0\}$ i.e. there can be no transmission at time $t$ if $E_t=0$. The energy at time $t+1$ can be written as:
\begin{equation}\label{eq:Et}
E_{t+1} = \max (E_t - U_t, 0).
\end{equation}

The estimator receives $Y_t$ at time $t$ which is given as,
\begin{equation}\label{eq:Yt}
Y_t = h(X_t,U_t) =  \begin{cases}
			X_t & \quad \text{if} \quad U_t = 1, \\
			\epsilon & \quad \text{if} \quad U_t = 0, \\ 
		  \end{cases}
\end{equation}
where $\epsilon$ denotes no transmission. 
The sensor makes the transmission decision at $t$ based on available information $X_{1:t}, E_{1:t}, Y_{1:t-1}$,
\begin{align} \label{eq: U_t}
U_t = f_t(X_{1:t},E_{1:t},Y_{1:t-1}),
\end{align}  
where $f_t$ is the transmission strategy of the sensor at time $t$.  We call the collection $\mathbf{f} = (f_1,f_2,\ldots,f_T)$ the transmission strategy. \\
The estimator produces an estimate of the state $\hat{X}_t$ based on its received information $Y_{1:t}$ at time $t$ as follows:

\begin{equation}\label{eq: x_hat}
\hat{X}_t = g_t (Y_{1:t}), 
\end{equation}
where $g_t$ denotes the estimation strategy at time $t$. The collection $\mathbf{g} = (g_1,g_2,\ldots,g_T)$ is referred to as the estimation strategy.
The cost incurred under a transmission strategy $\mathbf{f}$ and estimation strategy $\mathbf{g}$ is the worst case maximum instantaneous distortion cost over the entire horizon, given by,
\begin{equation}\label{Ctotal}
J(\mathbf f,\mathbf{g}) =   \displaystyle \sup_{N_{1:T}} \max_{ t \in \mathcal T}  || X_t - \hat{X}_t ||. 
\end{equation}

We can now formulate the following problem.


\begin{problem}
\label{prob:sensor}
Determine a transmission strategy $\mathbf f$ for the sensor and an estimation strategy $\mathbf g$ for the estimator which jointly minimize the cost $J(\mathbf f,\mathbf{g})$ in \eqref{Ctotal}.
\begin{align*}
&\min_{\mathbf{f},\mathbf{g}} J(\mathbf f,\mathbf{g}) \\
& \text{subject to } \eqref{eq:Xt}-\eqref{eq: x_hat}
\end{align*}
\end{problem}

\begin{remark}
Communication scheduling and remote estimation problems similar to Problem \ref{prob:sensor} have been studied in \cite{imer2010optimal,lipsa,nayyar2013optimal}. The key differences between the problems in \cite{imer2010optimal,lipsa,nayyar2013optimal} and Problem \ref{prob:sensor} are: (i) \emph{source model}- \cite{imer2010optimal,lipsa,nayyar2013optimal} deal with a stochastic source model whereas the source model in Problem \ref{prob:sensor} is non-stochastic; (ii) \emph{objective}- \cite{imer2010optimal,lipsa,nayyar2013optimal} deal with minimizing an expected cumulative cost over a time horizon whereas the objective in Problem \ref{prob:sensor} is to minimize the worst-case instantaneous cost. The objective in Problem \ref{prob:sensor} may be more suitable for safety critical systems.
\end{remark}

Next, we provide a structural result which establishes that the sensor can ignore past values of the source and energy levels without losing performance. 
\begin{lemma}\label{lem:1}
The transmission strategy can be restricted to the form $U_t = f_t(X_t,Y_{1:t-1})$ without any loss in performance.
\end{lemma}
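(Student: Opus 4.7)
\emph{Proof plan.} The argument has two stages. First, $E_{1:t}$ can be dropped: from $E_{t+1}=\max(E_t-U_t,0)$ and $E_1=K$ it is a deterministic function of $K$ and $U_{1:t-1}$, and from \eqref{eq:Yt} each $U_s$ is recoverable from $Y_s$ (since $Y_s=\epsilon$ iff $U_s=0$), so $E_{1:t}$ is itself a function of $K$ and $Y_{1:t-1}$ alone. It then remains to eliminate the dependence of $f_t$ on $X_{1:t-1}$, which I do by backward induction on $t$ with the estimation strategy $\mathbf{g}$ held fixed. The inductive hypothesis at step $t$ is that $f_s$ has already been replaced by some $f'_s(X_s, Y_{1:s-1})$ for every $s > t$ without increasing $J$.

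The key observation is the cost decomposition
\begin{align*}
\max_{s \in \mathcal{T}} \|X_s-\hat{X}_s\| = \max\bigl(A,\,B\bigr),
\end{align*}
where $A := \max_{s<t}\|X_s-\hat{X}_s\|$ is a function of $N_{1:t-1}$ alone (via the unchanged $f_{1:t-1}$ and $g_{1:t-1}$), while $B := \max_{s\geq t}\|X_s-\hat{X}_s\|$ is a function only of $(X_t, Y_{1:t-1}, U_t, N_{t+1:T})$; this latter claim uses the inductive hypothesis together with the Markov recursion $X_{s+1}=\lambda A X_s+N_{s+1}$ to propagate $X_s, U_s, Y_s, \hat{X}_s$ forward for all $s>t$ with no reference to $X_{1:t-1}$. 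Setting $C(x_t, y_{1:t-1}, u) := \sup_{N_{t+1:T}} B(x_t, y_{1:t-1}, u, N_{t+1:T})$, let $f'_t(x_t, y_{1:t-1})$ be any minimizer of $C(x_t, y_{1:t-1}, \cdot)$ over $\mathcal{U}(E_t)$ (attained because $|\mathcal{U}(E_t)|\leq 2$). Applying Property \ref{prop:smoothing} to split $\sup_{N_{1:T}}$ as $\sup_{N_{1:t-1}}\sup_{N_t}\sup_{N_{t+1:T}}$ (the $N_s$ being unrelated) and pushing the two inner suprema past $A$ yields
\begin{align*}
J(\mathbf{f}, \mathbf{g}) = \sup_{N_{1:t-1}} \max\Bigl(A,\;\sup_{N_t} C\bigl(X_t, Y_{1:t-1}, U_t\bigr)\Bigr).
\end{align*}
Pointwise in $(X_t, Y_{1:t-1})$, the choice $U_t=f'_t(X_t,Y_{1:t-1})$ gives a value of $C$ no larger than that from the original $f_t(X_{1:t}, Y_{1:t-1})$, so $J$ cannot increase; iterating from $t=T$ down to $t=1$ completes the argument since $\mathbf{g}$ was arbitrary.

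The main obstacle is verifying the clean separation of the cost into the past piece $A$ (independent of $U_t$ and $N_{t:T}$) and the future piece $B$ (depending on $X_{1:t-1}$ only through $X_t$); once this is in hand, the smoothing property of suprema does the rest.
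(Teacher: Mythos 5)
Your proof is correct, but it takes a more elementary and self-contained route than the paper. The paper's proof also fixes an arbitrary estimation strategy $\mathbf{g}$, but then simply observes that the sensor's remaining optimization is an instance of the general centralized minimax problem of Section \ref{sec:central} with the fully observable state $S_t=(X_t,E_t,Y_{1:t-1})$ (verifying that $S_{t+1}$ is a function of $(S_t,U_t,N_{t+1})$ and that the instantaneous cost is a function of $(S_t,U_t)$), and invokes Theorem \ref{thm:centralized} to conclude that an optimal transmission strategy depends only on the current $S_t$; the reduction to $f_t(X_t,Y_{1:t-1})$ then follows, as in your first stage, because $E_t$ is recoverable from $Y_{1:t-1}$ and $K$. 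You instead re-derive the relevant special case of that theorem by hand: a backward induction in which the cost is split into a past piece $A$ (a function of $N_{1:t-1}$ only, unaffected by the change at time $t$) and a future piece $B$ (a function of $(X_t,Y_{1:t-1},U_t,N_{t+1:T})$ under the inductive hypothesis), followed by the interchange of suprema via Property \ref{prop:smoothing} and a pointwise minimization over the at-most-two-element set $\mathcal{U}(E_t)$. Both arguments hinge on the same structural facts; what the paper's version buys is brevity by leaning on Theorem \ref{thm:centralized}, while yours makes the exchange-of-strategies argument fully explicit without appealing to the information-state machinery, which is arguably more transparent about why fixing $\mathbf{g}$ removes the decentralization difficulty. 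One small point worth stating explicitly in your write-up: the separation of $B$ from $X_{1:t-1}$ also requires that the admissible action set $\mathcal{U}(E_s)$ for $s\geq t$ is determined by $Y_{1:s-1}$, which is exactly your first-stage observation, so the two stages are not independent and should be invoked in that order.
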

\begin{proof}
See Appendix \ref{sec:lem_1}.
\end{proof}

Problem \ref{prob:sensor} is a minimax sequential decision-making problem with two decision-makers (the sensor and the estimator). We will adopt the common information approach \cite{nayyar2013optimal} for  \emph{stochastic} remote estimation problem to our minimax problem. This involves formulating a  single-agent sequential decision-making problem from the perspective of an agent who knows the common information. In our setup, we can adopt the estimator's perspective to formulate the single-agent problem as done in the following section.



\section{An Equivalent Problem} \label{sec:coord}

We now formulate a new sequential decision problem that will help us to solve Problem \ref{prob:sensor}. In the new problem, we
consider the model of Section \ref{sec:problem_formulation} with the following modification. At the beginning of $t^{th}$ time step, the estimator selects a mapping $\Gamma_t : \mathcal{X}  \mapsto \{0,1\} $. $\Gamma_t$ will be referred to as the estimator's prescription to the sensor. The sensor uses the prescription to evaluate $U_t$ as follows: 
\begin{equation}\label{eq:U_coordinator}
U_t = \Gamma_t(X_t).
\end{equation}
The estimator selects the prescription based on its available information, that is,  
\begin{equation}\label{eq:Gamma}
\Gamma_t = d_t (Y_{1:t-1}),
\end{equation}
where the function $d_t$ is referred to as the prescription strategy at time $t$.  At the end of $t^{th}$ time step, the estimator produces an estimate $\hat{X}_t$  as follows
\begin{equation}\label{eq:xhat_coordinator}
\hat{X}_t = g_t (Y_{1:t}),
\end{equation}
where $g_t$ is the estimation strategy at time $t$. The cost incurred by the prescription strategy $\mathbf{d} = (d_1,\ldots,d_T)$ and the estimation strategy $\mathbf{g}= (g_1,\ldots,g_T)$ is,
\begin{equation}
\hat{J} (\mathbf{d},\mathbf{g}) = \sup_{N_{1:T}} \max_{t \in \mathcal{T}} || X_t - \hat{X}_t ||. \nonumber
\end{equation}
We consider the following problem,
\begin{problem}
\label{prob:coordinator}
Determine a prescription strategy $\mathbf d$ and an estimation strategy $\mathbf g$  to minimize the cost 
$\hat{J}(\mathbf{d},\mathbf{g})$.

\begin{align*}
&\min_{\mathbf{d},\mathbf{g}} \hat{J}(\mathbf d,\mathbf{g}) \\
& \text{subject to } \eqref{eq:Xt}-\eqref{eq:Yt},\eqref{eq:U_coordinator}, \eqref{eq:Gamma}, \eqref{eq:xhat_coordinator}
\end{align*}

\end{problem}

In Problem \ref{prob:coordinator}, the estimator is the sole decision-maker since the sensor merely evaluates the prescription at the current source state. Problem \ref{prob:coordinator} can be shown to be equivalent to Problem \ref{prob:sensor} in a similar manner as in \cite{nayyar2013optimal} for the stochastic remote estimation problem. The main idea is that for every choice of sensor strategy $\mathbf{f}$ there exists an equivalent prescription strategy $\mathbf{d}$ and vice-versa. Since this equivalence is true for every realization of the uncertain variables $N_{1:T}$, the stochastic case argument also holds in this minimax scenario.

Problem \ref{prob:coordinator} can be seen as an instance of the minimax problem formulated in Section \ref{sec:central} as follows: 
\begin{enumerate}
\item We can imagine the system operating with $2T$ decision points by splitting each time instant into two decision points: (i) At each time $t$, before the transmission at that time the estimator decides the prescription  $\Gamma_t$; (ii) After receiving $Y_t$, the estimator decides $\hat{X}_t$. We denote this decision point by $t+$ (See Figure \ref{fig1}). 

\begin{figure}[]
            	\centerline  
                	{\includegraphics[width = 0.50\textwidth]{./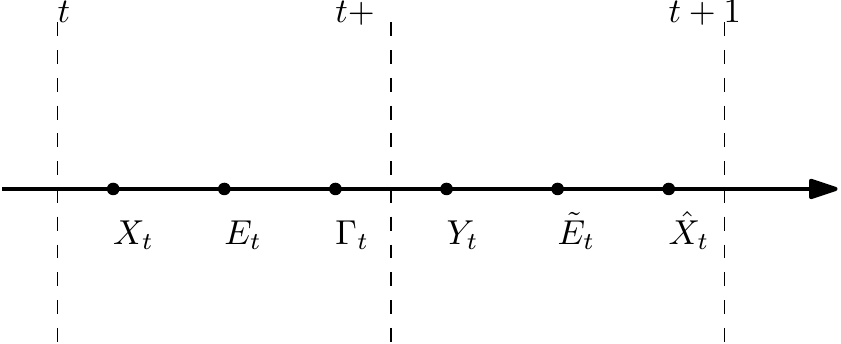}}
					\caption{Timeline of state realization, observations and actions in coordinator's problem}
            		\label{fig1}
            		
\end{figure} 

\item \textbf{State}: At $t$, the state is $S_t = (S_t^h,S_t^o) = (X_t,E_t)$ since $E_t$ is observable by the estimator. At $t+$, $S_{t+} = (S_{t+}^h,S_{t+}^o) = (X_t,\tilde{E}_t)$ where $\tilde{E}_t$ is the post-transmission energy given as
\begin{equation}\label{e_tilde}
\tilde{E}_t = E_t - \Gamma_t(X_t).
\end{equation}  

\item \textbf{Actions} : At $t$, action $A_t = \Gamma_t \in  \mathcal{A}(E_t)$, where $\mathcal{A}(E_t)$ is the collection of functions from $\mathcal{X}$ to $\mathcal{U}(E_t)$. Recall that $\mathcal{U}(0) = \{0\}$ and $\mathcal{U}(E_t) = \{0,1\}$ for $E_t>0$. At $t+$, action $A_{t+} = \hat{X}_t \in \mathbb{R}^n$. 

\item \textbf{Information}: The information available at time $t$ to choose a prescription is $Q_t = \{ Y_{1:t-1} , \Gamma_{1:t-1}, \hat{X}_{1:t-1}\}$ and at time $t+$ to generate $\hat{X}_t$ is $Q_{t+} = \{ Y_{1:t} , \Gamma_{1:t}, \hat{X}_{1:t-1} \}$.

\item \textbf{Cost}: The instantaneous cost at time $t$, $\rho_t (S_t,A_t) = 0$ and at time $t+$, $\rho_{t+}(S_{t+},A_{t+}) = ||X_t - \hat{X}_t ||$. 
\end{enumerate} 

Since Problem \ref{prob:coordinator} is an instance of the minimax problem of Section \ref{sec:central}, we can use Theorem \ref{thm:centralized}  to conclude that the optimal strategy is a function of the conditional range of the state $(X_t,E_t)$ given the estimator's information. Since $E_t$ is known to the estimator, we just need to define the conditional range of $X_t$. 
For that purpose, we define $\Theta_t$ as the pre-transmission conditional range of $X_t$  and $\Pi_t$ as the post-transmission conditional range  of $X_t$ at time $t$ as follows:
\begin{equation}
\Theta_t = [[X_t | Q_t]], \, \, \Pi_t = [[X_t | Q_{t+}]]. \nonumber
\end{equation}
The following lemma describes the evolution of the sets $\Theta_t$ and $\Pi_t$.

\begin{lemma}~\label{lem:uncertainty set}
\begin{enumerate}
\item The pre-transmission conditional range $\Theta_{t+1}$ at time $t+1$ is a function of $\Pi_t$ i.e. $\Theta_{t+1} = \phi_t (\Pi_t)$.
\item The post-transmission conditional range $\Pi_t$ is a function of $\Theta_t, \Gamma_t$ and $Y_t$  i.e. $\Pi_t = \psi (\Theta_t,\Gamma_t,Y_t)$.
\end{enumerate}
\end{lemma}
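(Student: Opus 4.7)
The plan is to handle the two claims separately, exploiting the fact that each new element of the conditioning information beyond the previous ``state'' is either redundant (already a deterministic function of earlier information) or a direct observable that refines the set in an elementary way.

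For the first claim, I would start by observing that the conditioning sets $Q_{t+1} = \{Y_{1:t},\Gamma_{1:t},\hat X_{1:t}\}$ and $Q_{t+} = \{Y_{1:t},\Gamma_{1:t},\hat X_{1:t-1}\}$ differ only in the inclusion of $\hat X_t$, and that $\hat X_t = g_t(Y_{1:t})$ is a deterministic function of $Q_{t+}$. Hence $\rgc{X_t}{Q_{t+1}} = \rgc{X_t}{Q_{t+}} = \Pi_t$. Then, using \eqref{eq:Xt}, namely $X_{t+1} = \lambda A X_t + N_{t+1}$, and the fact that $N_{t+1}$ is unrelated to everything generated up through time $t$ (in particular, unrelated to $Q_{t+1}$), I would invoke Property \ref{prop:cond_indep} to conclude $\rgc{(X_t,N_{t+1})}{Q_{t+1}} = \rgc{X_t}{Q_{t+1}} \times \rg{N_{t+1}}$. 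Therefore
\begin{equation*}
\Theta_{t+1} = \bigl\{\lambda A x + n : x \in \Pi_t,\ n \in \rg{N_{t+1}}\bigr\},
\end{equation*}
which is manifestly a function of $\Pi_t$ alone, giving the desired $\phi_t$.

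For the second claim, the key observation is $Q_{t+} = Q_t \cup \{Y_t,\Gamma_t\}$ with $\Gamma_t = d_t(Y_{1:t-1})$ already determined by $Q_t$, so $\rgc{X_t}{Q_{t+}} = \rgc{X_t}{Q_t,Y_t}$. I would then split on the value of $Y_t$. If $Y_t = X_t \neq \epsilon$, the observation pins down $X_t$ exactly, so $\Pi_t = \{Y_t\}$. If $Y_t = \epsilon$, no direct observation of $X_t$ occurred but the event $U_t = 0$ implies $\Gamma_t(X_t) = 0$; hence $\Pi_t = \{x \in \Theta_t : \Gamma_t(x) = 0\}$. In both cases $\Pi_t$ is determined by $\Theta_t$, $\Gamma_t$, and $Y_t$, yielding the function $\psi$.

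The main obstacle is really bookkeeping rather than any deep difficulty: one needs to be careful that ``conditioning'' in the uncertain-variable sense truly reduces to set restriction, and that no hidden dependence on the full history $Q_t$ sneaks into $\Pi_t$. The key facts that make this work are (i) the strategies $d_t$ and $g_t$ produce quantities that are measurable with respect to the prior information (so they add nothing to the conditioning set), and (ii) the noise $N_{t+1}$ is unrelated to the past, which via Property \ref{prop:cond_indep} lets the conditional range of $X_{t+1}$ be written as a Minkowski-type expression in $\Pi_t$ and $\rg{N_{t+1}}$. Once these two observations are in place the two parts follow by straightforward set manipulations.
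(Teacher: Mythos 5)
Your proposal is correct and follows essentially the same route as the paper, which simply writes down the same two update formulas for $\phi_t$ and $\psi$ (equations \eqref{phi} and \eqref{psi}) without further justification. Your additional bookkeeping --- noting that $\hat X_t$ and $\Gamma_t$ are deterministic functions of the prior information and invoking Property \ref{prop:cond_indep} for the unrelated noise $N_{t+1}$ --- is exactly the reasoning the paper makes explicit only in the general centralized setting (Lemma \ref{lm:piupdate} in the appendix), so it fills in the omitted details rather than diverging from the intended argument.
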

\begin{proof}~
\begin{enumerate}
\item Given the post-transmission conditional range $\Pi_t$, $\Theta_{t+1}$ is given as
\begin{align}\label{phi}
\Theta_{t+1} = \Big\{x_{t+1} : &x_{t+1} = \lambda A x_t + n_{t+1}  \nonumber \\
 & \text{for some} \, x_t \in \Pi_t ~~~ \text{and} \, || n_{t+1} || \leq a_{t+1}  \Big\}.\\
 & := \phi_t (\Pi_t) \nonumber
\end{align}
\item Given the pre-transmission conditional range $\Theta_t$,  $\Pi_t$ can be evaluated after receiving $Y_t$ as follows
\begin{align}\label{psi}
\Pi_t &= \begin{cases}
			\{Y_t\} \, &if  \, Y_t \neq \epsilon \\
			\{x_t \in \Theta_t : \Gamma_t (x_t) = 0 \} \, &otherwise
\end{cases} \\
& := \psi (\Theta_t,\Gamma_t,Y_t) \nonumber
\end{align}

\end{enumerate}
\end{proof}

Let $\mathcal{B}$ denote the space of all possible realizations of $\Pi_t, \Theta_t$ and $\mathcal{E} =\{0,1,\ldots,K\}$. Then, Theorem \ref{thm:centralized} can be used to write a dynamic program which characterizes the optimal estimates $\hat{X}_t$ and the optimal prescriptions $\Gamma_t$ in Problem \ref{prob:coordinator} as follows,

\begin{lemma}\label{lem:dp}
For $t \in \mathcal{T}$, define the functions $V_t: \mathcal{B} \times \mathcal{E}  \mapsto \mathbb{R}$ and $W_t: \mathcal{B} \times \mathcal{E}  \mapsto \mathbb{R}$ as follows: \\
(i) For $\pi_T \in \mathcal{B}$ and $\tilde{e}_T \in \mathcal{E}$ define\footnote{$\tilde{e}_t$ denotes a realization of the post-transmission energy as defined in \eqref{e_tilde}.},
\begin{equation}
 V_T (\pi_T,\tilde{e}_T) := \inf_{\hat{x}_T \in \mathbb{R}^n} \sup_{x_T \in \pi_T} || x_T-\hat{x}_T ||, \label{v_T}
\end{equation}
(ii) For $t \in \mathcal{T}$, $\theta_t \in \mathcal{B}$ and $e_t \in \mathcal{E}$ define,
\begin{align}
&W_t (\theta_t, e_t) := \inf_{\gamma_t \in \mathcal{A}(e_t)} \sup_{x_t \in \theta_t} V_t(\psi(\theta_t, \gamma_t, y_t), e_t - \gamma_t(x_t)), \label{w_t} 
\end{align}
~~ where $y_t = h(x_t,\gamma_t(x_t))$. \\
(iii) For $t < T$, $\pi_t \in \mathcal{B}$ and $\tilde{e}_t \in \mathcal{E}$ define,
\begin{align}
&V_t (\pi_t, \tilde{e}_t) := \inf_{\hat{x_t}\in \mathbb{R}^n} \sup_{x_t \in \pi_t} \lbrace \max \left( ||x_t - \hat{x}_t ||, W_{t+1}(\phi_t(\pi_t),\tilde{e}_t) \right) \rbrace.  \label{v_t} 
\end{align}
Suppose the infimum in \eqref{v_T},\eqref{w_t},\eqref{v_t} are always achieved. Then, for each $\theta_t \in \mathcal{B}$ and $e_t \in \mathcal{E}$ the minimizing $\gamma_t$ in \eqref{w_t} gives the optimal prescription at time $t$. Also, for each $\pi_t$(or $\pi_T$)$\in \mathcal{B}$, the minimizing $\hat{x}_t (\text{or } \hat{x}_T)$ gives the optimal estimate. Furthermore, $W_1([[X_1]],K)$ is the optimal cost for Problem \ref{prob:coordinator}. 
\end{lemma}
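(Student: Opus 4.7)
The plan is to derive \eqref{v_T}--\eqref{v_t} as a direct specialization of Theorem \ref{thm:centralized} to Problem \ref{prob:coordinator}, using the $2T$-epoch reformulation and the state/action/information/cost assignments listed immediately before the lemma. The task therefore reduces to (i) verifying that the information state required by Theorem \ref{thm:centralized} is indeed $(\Theta_t, e_t)$ at pre-transmission epochs and $(\Pi_t, \tilde e_t)$ at post-transmission epochs, and (ii) instantiating Theorem \ref{thm:centralized}'s abstract Bellman recursion at both kinds of epoch.

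For step (i), Theorem \ref{thm:centralized} asks for the conditional range of the hidden component of the state given the estimator's information, paired with the observable component. Here the hidden component is $X_t$ at both $t$ and $t+$, while the observable component is $E_t$ at pre-transmission and $\tilde E_t$ at post-transmission. By Lemma \ref{lem:uncertainty set}, the conditional ranges evolve deterministically as $\Theta_{t+1} = \phi_t(\Pi_t)$ and $\Pi_t = \psi(\Theta_t,\Gamma_t,Y_t)$, and the energy updates are deterministic via $\tilde E_t = e_t - \Gamma_t(X_t)$ and $E_{t+1} = \tilde E_t$. Hence $(\Theta_t, e_t)$ and $(\Pi_t, \tilde e_t)$ qualify as information states in the sense of Theorem \ref{thm:centralized}.

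For step (ii), I would specialize Theorem \ref{thm:centralized} at each epoch. At the terminal post-transmission epoch $T+$, only the estimation action $\hat x_T$ remains and its cost is $\|x_T - \hat x_T\|$ with no successor, giving \eqref{v_T}. At an interior post-transmission epoch $t+$, the instantaneous cost is $\|x_t - \hat x_t\|$ and the next-epoch (pre-transmission) state is $(\phi_t(\pi_t),\tilde e_t)$, which is \emph{deterministic} in $(\pi_t,\tilde e_t)$. This is the subtle point and the main obstacle: Theorem \ref{thm:centralized}'s recursion carries an explicit $\sup_{n_{t+1}}$ inside the $\max$, and I need to show it is fully absorbed by the set-valued map $\phi_t$, so that the successor value evaluates to the single quantity $W_{t+1}(\phi_t(\pi_t),\tilde e_t)$ independent of $n_{t+1}$. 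Once this is checked, \eqref{v_t} follows by pulling the $x_t$-independent term $W_{t+1}(\phi_t(\pi_t),\tilde e_t)$ inside the $\sup_{x_t \in \pi_t}$ as a constant. At a pre-transmission epoch $t$, the instantaneous cost is zero and the within-step transition to $t+$ is noise-free, so the recursion collapses to \eqref{w_t} with next-epoch state $(\psi(\theta_t,\gamma_t,y_t),\,e_t - \gamma_t(x_t))$ and $y_t = h(x_t,\gamma_t(x_t))$.

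Finally, Theorem \ref{thm:centralized} asserts the optimal cost is $\sup_{Q_1} V_1^*(\Pi_1,S_1^o)$ evaluated at the very first epoch. In our setting that epoch is pre-transmission at $t=1$, so the relevant value function is $W_1$; since $Q_1$ is empty, $\Theta_1 = \rg{X_1}$ and $E_1 = K$ deterministically, the outer supremum collapses, and the optimal cost equals $W_1(\rg{X_1},K)$ as claimed.
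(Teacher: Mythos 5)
Your proposal is correct and takes essentially the same route as the paper, whose proof is a single sentence invoking Theorem \ref{thm:centralized} and Lemma \ref{lem:uncertainty set} and associating $V_t$ with the value function at epoch $t+$ and $W_t$ with that at epoch $t$. Your expansion --- in particular the observation that the $\sup_{n_{t+1}}$ in Theorem \ref{thm:centralized}'s recursion is absorbed into the deterministic update $\Theta_{t+1}=\phi_t(\Pi_t)$ because no new observation arrives between $t+$ and $t+1$, and that the outer supremum over $Q_1$ collapses since $Q_1$ is empty --- correctly supplies the details the paper leaves implicit.
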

\begin{proof}
The result follows by writing the dynamic program using Theorem \ref{thm:centralized}, Lemma \ref{lem:uncertainty set} and associating the function $V_t$ with the value function at time $t+$ and $W_t$ with the value function at time $t$.
\end{proof}
Note that the above dynamic program is computationally hard to solve because: i) It involves minimization over functions in \eqref{w_t} ii) The information state is the conditional range of the source state and thus can be any arbitrary subset of $\mathcal{X}$. In the next section, we will analyze the dynamic program to obtain certain properties of the value functions which will help us in identifying the structure of the optimal strategies. 

\section{Globally optimal strategies}\label{sec:opt}

%
%

We now proceed with solving the dynamic program of Lemma \ref{lem:dp}. We proceed in four steps. 
~\\
\textbf{Step 1: Nature of optimal prescriptions} \\
We define a relation $\mathbf{Q}$ between sets which will be helpful in identifying the structure of the globally optimal prescriptions. To that end, we define the radius of a set $S \subset \mathbb{R}^n$ as $r^*(S) := \inf_{x \in \mathbb{R}^n} \sup_{y \in S} ||y - x||$. The following lemma gives the relation between the radius of a set $E$ and the radius of its transformation $\phi_t(E)$ defined by \eqref{phi}.
\begin{lemma}\label{lem:diam}
Let $E \subset \mathbb{R}^n$. Then,
\begin{equation}
r^*(\phi_t(E)) = |\lambda | r^*(E) + a_{t+1}.
\end{equation}
\end{lemma}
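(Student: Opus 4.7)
The plan is to prove $r^*(\phi_t(E)) = |\lambda| r^*(E) + a_{t+1}$ by establishing the two inequalities separately. The key structural observation is that $\phi_t(E)$ is the Minkowski sum of $\lambda A E$ with the closed ball of radius $a_{t+1}$ centered at the origin, so its Chebyshev-type radius should decompose additively into contributions from the two summands. Throughout, I would exploit that $A$ is orthogonal, hence $||Av|| = ||v||$ for every $v \in \mathbb{R}^n$.

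For the upper bound $r^*(\phi_t(E)) \le |\lambda| r^*(E) + a_{t+1}$, I would fix $\epsilon > 0$ and pick an $\epsilon$-approximate Chebyshev center $c^\epsilon$ of $E$ satisfying $\sup_{x \in E} ||x - c^\epsilon|| \le r^*(E) + \epsilon$. Using $\lambda A c^\epsilon$ as a candidate center for $\phi_t(E)$, for an arbitrary $z = \lambda A x + n \in \phi_t(E)$, the triangle inequality together with the orthogonality of $A$ give
\begin{equation*}
||z - \lambda A c^\epsilon|| \le ||\lambda A(x - c^\epsilon)|| + ||n|| \le |\lambda|(r^*(E) + \epsilon) + a_{t+1}.
\end{equation*}
Taking supremum over $z \in \phi_t(E)$, then infimum over centers, then letting $\epsilon \to 0$ delivers the upper bound.

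For the lower bound $r^*(\phi_t(E)) \ge |\lambda| r^*(E) + a_{t+1}$, I would fix an arbitrary candidate center $y \in \mathbb{R}^n$ and rewrite $\sup_{z \in \phi_t(E)} ||z - y|| = \sup_{x \in E,\, ||n|| \le a_{t+1}} ||\lambda A x + n - y||$. Maximizing over $n$ first (by aligning $n$ with $\lambda A x - y$, which works even when $\lambda A x = y$) gives $\sup_{||n|| \le a_{t+1}} ||\lambda A x + n - y|| = ||\lambda A x - y|| + a_{t+1}$. The case $\lambda = 0$ is then immediate since $\phi_t(E)$ reduces to a ball of radius $a_{t+1}$. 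For $\lambda \neq 0$, I would use $A^{-1} = A^\top$ to rewrite $||\lambda A x - y|| = |\lambda|\, ||x - A^\top y/\lambda||$, so $\sup_{x \in E} ||\lambda A x - y|| \ge |\lambda| r^*(E)$ by the definition of $r^*(E)$. Taking the infimum over $y$ then concludes.

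The main obstacle, which is quite mild, is being careful about the possible non-existence of an exact Chebyshev center for an arbitrary bounded set $E$; this is handled cleanly by the $\epsilon$-approximation above and avoids any need for closedness or compactness of $E$. The essential algebraic ingredient throughout is the norm-preservation $||Av|| = ||v||$, which lets $|\lambda|$ factor out cleanly and makes the change of variables $c \mapsto \lambda A c$ a bijection between centers of $E$ and centers of $\lambda A E$.
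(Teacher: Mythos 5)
Your proposal is correct and follows essentially the same route as the paper's proof: both arguments reduce to showing that for any candidate center $y$, $\sup_{z \in \phi_t(E)} \|z - y\| = |\lambda|\, r(E, A^{-1}y/\lambda) + a_{t+1}$, using the triangle inequality plus orthogonality of $A$ for the upper bound and an aligned choice of noise $n$ for the lower bound, before infimizing over centers via the bijection $y \mapsto A^{-1}y/\lambda$. Your reorganization (approximate Chebyshev center of $E$ pushed forward for one direction, arbitrary center of $\phi_t(E)$ pulled back for the other, and optimizing over $n$ exactly before optimizing over $x$) is a mild and clean repackaging of the same ideas, with the same handling of the $\lambda=0$ case and the non-attainment of the infimum.
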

\begin{proof}
See Appendix \ref{sec:lem_prop_Q}.
\end{proof}

We now define a relation $\mathbf{Q}$ between sets and a property $\mathbf{Q}$ for functions.
\begin{definition}~
\begin{enumerate}
\item Let $G,H \subset \mathbb{R}^n$ be two sets. We say $G \mathbf{Q} H$ if $r^*(G)= r^*(H)$.
\item We say that a function $f:\mathcal{B} \times \mathcal{E} \mapsto \mathbb{R}$ satisfies property $\mathbf{Q}$ if 
\begin{equation}
G \mathbf{Q} H \implies f(G,e) = f(H,e) \, \, \forall e \in \mathcal{E}. \nonumber
\end{equation}
\end{enumerate}
\end{definition}

%

Let $\gamma^{all}$  denote the 'always transmit' prescription, i.e. $\gamma^{all} (x) = 1, \forall x \in \mathcal{X}$. Let $\gamma^{none}$ denote the 'never transmit' prescription, i.e. $\gamma^{none}(x) = 0, \forall x \in \mathcal{X}$.



\begin{lemma}~\label{lem:prop Q}
\begin{enumerate}
\item For each $t \in \mathcal{T}$, the functions $V_t$ and $W_t$  of Lemma \ref{lem:dp} satisfy property $\mathbf{Q}$.
\item For each $t \in \mathcal{T}$, either $\gamma^{all}$ or $\gamma^{none}$ is an optimal choice of prescription $\gamma_t$ in \eqref{w_t}.
\end{enumerate}
\end{lemma}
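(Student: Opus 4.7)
The plan is to prove both parts simultaneously by backward induction on $t \in \mathcal{T}$, alternating between $V_t$ and $W_t$. The base case is immediate from \eqref{v_T}: the infimum of the maximum distance is precisely the radius, so $V_T(\pi_T, \tilde{e}_T) = r^*(\pi_T)$ depends on $\pi_T$ only through $r^*(\pi_T)$ and satisfies $\mathbf{Q}$. The inductive step splits into two halves: (a) if $V_t$ satisfies $\mathbf{Q}$, then $W_t$ satisfies $\mathbf{Q}$ and the infimum in \eqref{w_t} is attained at $\gamma^{all}$ (when $e_t > 0$) or $\gamma^{none}$; (b) if $W_{t+1}$ satisfies $\mathbf{Q}$, then $V_t$ satisfies $\mathbf{Q}$ for $t < T$. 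Chaining these starting from $V_T$ yields both claims for all $t$.

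For (a), I use property $\mathbf{Q}$ to write $V_t(\pi, \tilde{e}) = \bar{V}_t(r^*(\pi), \tilde{e})$. For any prescription $\gamma_t \in \mathcal{A}(e_t)$, I partition $\theta_t$ into $A = \{x \in \theta_t : \gamma_t(x) = 1\}$ and $B = \{x \in \theta_t : \gamma_t(x) = 0\}$. From \eqref{psi}, for $x_t \in A$ the post-transmission set is the singleton $\{x_t\}$ contributing $\bar{V}_t(0, e_t - 1)$, while for $x_t \in B$ it is $B$ itself, contributing $\bar{V}_t(r^*(B), e_t)$. Hence whenever both $A$ and $B$ are nonempty,
\begin{equation*}
\sup_{x_t \in \theta_t} V_t\bigl(\psi(\theta_t, \gamma_t, y_t), e_t - \gamma_t(x_t)\bigr) = \max\bigl\{\bar{V}_t(0, e_t - 1),\; \bar{V}_t(r^*(B), e_t)\bigr\} \;\geq\; \bar{V}_t(0, e_t - 1),
\end{equation*}
which is exactly the value achieved by $\gamma^{all}$. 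Thus no mixed prescription strictly beats $\gamma^{all}$, and when $e_t > 0$ the infimum in \eqref{w_t} reduces to $\min\{\bar{V}_t(0, e_t - 1),\, \bar{V}_t(r^*(\theta_t), e_t)\}$, attained by $\gamma^{all}$ or $\gamma^{none}$; when $e_t = 0$ only $\gamma^{none}$ is feasible. In either case $W_t(\theta_t, e_t)$ depends on $\theta_t$ only through $r^*(\theta_t)$, so $W_t$ satisfies $\mathbf{Q}$.

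For (b), note that $W_{t+1}(\phi_t(\pi_t), \tilde{e}_t)$ is constant in both $x_t$ and $\hat{x}_t$ in \eqref{v_t}. Pulling it out of both the sup and the inf (via $\inf \max(f,c) = \max(\inf f, c)$) gives $V_t(\pi_t, \tilde{e}_t) = \max\{r^*(\pi_t),\, W_{t+1}(\phi_t(\pi_t), \tilde{e}_t)\}$. By Lemma \ref{lem:diam}, $r^*(\phi_t(\pi_t)) = |\lambda| r^*(\pi_t) + a_{t+1}$; combined with the inductive hypothesis on $W_{t+1}$, this expresses $V_t$ as a function of $r^*(\pi_t)$ and $\tilde{e}_t$ only, so $V_t$ satisfies $\mathbf{Q}$. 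The main obstacle will be the domination argument in (a): one must recognize that the transmitting branch always collapses the post-transmission conditional range to a singleton of radius zero, so its contribution $\bar{V}_t(0, e_t - 1)$ is independent both of the realization in $A$ and of the partition itself. This observation makes every non-degenerate prescription dominated by $\gamma^{all}$, reducing the optimization to a comparison between the two extreme choices.
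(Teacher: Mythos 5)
Your proof is correct and follows essentially the same route as the paper's: backward induction alternating $V_t \Rightarrow W_t$ and $W_{t+1} \Rightarrow V_t$, with the key step being the partition of $\theta_t$ into transmit/no-transmit subsets and the observation that the transmitting branch collapses to a radius-zero singleton, so every mixed prescription is dominated by $\gamma^{all}$ (the paper's $\max\{V_t(A_{\gamma,\theta_t},e_t), K_t(e_t-1)\} \geq K_t(e_t-1)$ argument). The only cosmetic difference is that you fold the $W_T$ base case into the general $V_t \Rightarrow W_t$ step, which the paper treats separately.
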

\begin{proof}
See Appendix \ref{sec:lem_prop_Q}.
\end{proof}

Consider two singleton sets $\{x^1_t\}$ and $\{x^2_t\}$. The first part of Lemma \ref{lem:prop Q} implies that $V_t (\{x^1_t\},e_t) = V_t (\{x^2_t\},e_t)$  because $\{x_t^1\} \mathbf{Q} \{x_t^2\}$. Thus, $V_t (\{x_t\},e_t)$ does not depend on the value of $x_t$ and can be represented as function of energy alone, that is,  $V_t (\{x_t\},e_t) = K_t(e_t)$. The second part of Lemma \ref{lem:prop Q} implies that we can replace the infimum in \eqref{w_t} by minimzation over just two prescriptions, $\gamma^{all}$ and $\gamma^{none}$. 
Using the above observations, we can reduce the dynamic program of Lemma \ref{lem:dp} to the following:
\begin{align}
&V_T (\pi_T,\tilde{e}_T) = r^*(\pi_T),\label{eq13} \\
&V_t (\pi_t, \tilde{e}_t) = \max \left(r^*(\pi_t) , W_{t+1}(\phi_t(\pi_t),\tilde{e}_t) \right), ~~ \mbox{for}~t <T, \label{eq15}
\end{align}
where \eqref{eq13} and \eqref{eq15} follow from the definition of $r^*(\pi_t)$ and the dynamic program in Lemma \ref{lem:dp}; for $e_t >0$, 
\begin{align}
W_t (\theta_t, e_t) &= \min\Big\{ \sup_{x_t \in \theta_t} V_t(\psi(\theta_t, \gamma^{all}, y_t), e_t - 1),
\notag \\ &~~~\sup_{x_t \in \theta_t} V_t(\psi(\theta_t, \gamma^{none}, y_t), e_t)\Big\} \notag \\
&=\min\Big\{ \sup_{x_t \in \theta_t} V_t(\{x_t\}, e_t - 1), \sup_{x_t \in \theta_t} V_t(\theta_t, e_t)\Big\}\notag \\
&= \min \{K_t(e_t-1),V_t(\theta_t,e_t) \}, ~~ t \in \mathcal{T}, \label{eq14}
\end{align}
where $K_t(e_t-1) = V_t (\{x_t\},e_t-1)$ for any $x_t$. For $e_t =0$,
\begin{align}
W_t (\theta_t, 0) &=\sup_{x_t \in \theta_t} V_t(\psi(\theta_t, \gamma^{none}, y_t), 0)  =V_t(\theta_t,0). \label{eq16}
\end{align}

\textbf{Step 2: Simplified information state} \\
We will now use property $\mathbf{Q}$ to  simplify the information state of the dynamic program. Lemma \ref{lem:prop Q} suggests that value functions $V_t,W_t$ depend only on the radius of the conditional range. Thus, we would expect that the radius of the conditional range can act as an information state of the dynamic program. This idea is formalized in the following lemma.

\begin{lemma}\label{lem:w_tilde}
Define $\tilde{V}_t : \mathbb{R}^+ \times \mathcal{E} \rightarrow \mathbb{R}$ and $\tilde{W}_t : \mathbb{R}^+ \times \mathcal{E} \rightarrow \mathbb{R}$ as follows:\\
(i) For $t = T$, $r \in \mathbb{R}^+$ and $\tilde{e} \in \mathcal{E}$,
\begin{align}
\tilde{V}_T (r, \tilde{e}) & := r,   \label{eq10}
\end{align}
(ii) For $t \in \mathcal{T}$,  $r \in \mathbb{R}^+$ and $\tilde{e} \in \mathcal{E}$,

\begin{equation} \label{eq11}
\tilde{W}_t (r,e) := \begin{cases}
\displaystyle \min (\tilde{V}_t (r,e), \tilde{V}_t(0, e -1)), &~~\mbox{if}~e>0, \\
 \tilde{V}_t (r,0) ~~ &\mbox{if}~e =0.
\end{cases}
\end{equation}
(iii) For $t <T$,
\begin{align}
\tilde{V}_t (r,\tilde{e}) & := \max \left(r, \tilde{W}_{t+1} (|\lambda | r + a_{t+1} , \tilde{e})\right). \label{eq12}
\end{align}

Then, for $t \in \mathcal{T}$,
\begin{align}
V_t(\pi_t,\tilde{e}_t) &= \tilde{V}_t (r^*(\pi_t),\tilde{e}_t), \label{eq_vtilde}\\
W_t(\theta_t,e_t) &= \tilde{W}_t (r^*(\theta_t),e_t). \label{eq_wtilde}
\end{align}
\end{lemma}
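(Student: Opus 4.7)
The plan is to prove \eqref{eq_vtilde} and \eqref{eq_wtilde} simultaneously by backward induction on $t$, using the simplified recursions \eqref{eq13}--\eqref{eq16} that were derived just before the lemma, together with Lemma~\ref{lem:diam} to handle the $\phi_t$ transformation.

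For the base case $t=T$, I would start from \eqref{eq13}, which gives $V_T(\pi_T,\tilde e_T) = r^*(\pi_T)$; this matches $\tilde V_T(r^*(\pi_T),\tilde e_T) = r^*(\pi_T)$ by \eqref{eq10}. For $W_T$, I apply \eqref{eq14} and \eqref{eq16}, noting that $K_T(e_T-1) = V_T(\{x_T\},e_T-1) = r^*(\{x_T\}) = 0$; then a direct check against \eqref{eq11} (with $\tilde V_T(0,e-1)=0$ and $\tilde V_T(r,e)=r$) shows $W_T(\theta_T,e_T) = \tilde W_T(r^*(\theta_T),e_T)$ in both the $e_T>0$ and $e_T=0$ cases.

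For the inductive step, assume \eqref{eq_wtilde} holds at time $t+1$. To establish \eqref{eq_vtilde} at time $t$, I would use \eqref{eq15}, which says
\begin{equation*}
V_t(\pi_t,\tilde e_t) \;=\; \max\bigl(r^*(\pi_t),\, W_{t+1}(\phi_t(\pi_t),\tilde e_t)\bigr).
\end{equation*}
The induction hypothesis turns $W_{t+1}(\phi_t(\pi_t),\tilde e_t)$ into $\tilde W_{t+1}(r^*(\phi_t(\pi_t)),\tilde e_t)$, and then Lemma~\ref{lem:diam} rewrites $r^*(\phi_t(\pi_t)) = |\lambda|\,r^*(\pi_t) + a_{t+1}$, giving exactly the right-hand side of \eqref{eq12} evaluated at $r = r^*(\pi_t)$. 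Hence $V_t(\pi_t,\tilde e_t) = \tilde V_t(r^*(\pi_t),\tilde e_t)$.

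Having just proved \eqref{eq_vtilde} at time $t$, I would then establish \eqref{eq_wtilde} at time $t$. In the case $e_t>0$, \eqref{eq14} gives $W_t(\theta_t,e_t) = \min\{K_t(e_t-1),\,V_t(\theta_t,e_t)\}$, where $K_t(e_t-1) = V_t(\{x_t\},e_t-1) = \tilde V_t(r^*(\{x_t\}),e_t-1) = \tilde V_t(0,e_t-1)$ using the step just established and $r^*(\{x_t\}) = 0$. Combining with $V_t(\theta_t,e_t) = \tilde V_t(r^*(\theta_t),e_t)$ and comparing to \eqref{eq11} finishes this case, and the $e_t=0$ case follows immediately from \eqref{eq16} and \eqref{eq_vtilde}.

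The proof is essentially a routine verification, so I do not anticipate a serious obstacle. The only points that require a bit of care are (i) invoking Lemma~\ref{lem:diam} at exactly the right moment so that the $|\lambda|\,r + a_{t+1}$ argument of $\tilde W_{t+1}$ in \eqref{eq12} is matched correctly, and (ii) using the key observation (established in Step~1 of Section~\ref{sec:opt} via Lemma~\ref{lem:prop Q}) that $V_t$ evaluated at any singleton depends only on the energy, so that $K_t(e_t-1)$ can legitimately be written as $\tilde V_t(0,e_t-1)$.
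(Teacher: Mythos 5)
Your proposal is correct and follows essentially the same route as the paper's proof: backward induction in which \eqref{eq_vtilde} at $t$ is obtained from \eqref{eq15}, the induction hypothesis at $t+1$, and Lemma~\ref{lem:diam}, while \eqref{eq_wtilde} at $t$ follows from \eqref{eq14}--\eqref{eq16} and the identity $K_t(e_t-1)=\tilde V_t(0,e_t-1)$. No gaps; your write-up is in fact somewhat more explicit about the base case for $W_T$ than the paper's.
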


\begin{proof}
$V_T(\pi_T,\tilde{e}_T) =  \tilde{V}_T (r^*(\pi_t),\tilde{e}_t) $ follows from \eqref{eq10}, \eqref{eq13}. We then proceed by induction --- we first  show that \eqref{eq_wtilde} is true if \eqref{eq_vtilde} is true for $t$. \eqref{eq_wtilde} follows easily from \eqref{eq14},\eqref{eq16} and the induction hypothesis by noting that  $K_t(e_t-1) = V_t(\{x\},e_t-1) = \tilde{V}_t(0,e_t-1)$. Next, we show that \eqref{eq_vtilde} is true for $t$ if \eqref{eq_wtilde} is true for $t+1$. Using \eqref{eq15} and the induction hypothesis together with the fact that $r^*(\phi_t(\pi_t)) = |\lambda| r^*(\pi_t) + a_{t+1}$,  \eqref{eq_vtilde} can be easily established.
\end{proof}

We can further eliminate $\tilde{V}_t$ from \eqref{eq10}-\eqref{eq12} to obtain a recursive relation among $\tilde{W}_t$ given as:

\begin{equation} \label{eq:reduced_dp1}
\tilde{W}_T (r,e) = \begin{cases}
							 0 \, \hspace{1cm} \text{if} \, e >0, \\
							 r \,\hspace{1cm} \text{if} \, e = 0,
							 \end{cases}
\end{equation}

For $t<T$,

\begin{equation}\label{eq:reduced_dp2}
\tilde{W}_t (r,e) = \begin{cases}
\min \left\{ \max \{r, \tilde{W}_{t+1} (|\lambda | r + a_{t+1},e)\}, \right. \\
\left. ~~~~~~~~~~ \tilde{W}_{t+1}(a_{t+1},e-1) \right\}, ~\text{for}~e>0 \\
 \max \{r, \tilde{W}_{t+1} (|\lambda | r + a_{t+1},0)\}   ~\text{for}~e=0 
\end{cases} 
\end{equation}	
						 
The above equations can be seen as a reduced version of the dynamic program of Lemma \ref{lem:dp} with the radius of the conditional range and the energy level as the information state. Unlike the dynamic program of Lemma \ref{lem:dp}, however, the above dynamic program is completely deterministic, that is, it does not involve maximization over any uncertain variables. In the next step, we will connect this deterministic dynamic program to a deterministic optimal control problem and use it to identify optimal transmission strategy. \\

\textbf{Step 3: A deterministic control problem}\\
Consider a deterministic control system with state  $(X_t^d, E_t^d) \in \mathbb{R}^{+} \times \mathcal{E}$ and control action $U_t^d \in \mathcal{U}(E_t^d)$, where $\mathcal{U}(E_t^d) = \{0,1\}$ if $E_t^d > 0$ and $\mathcal{U}(0) = \{0\}$, operating for a time horizon  $T$. The dynamics of the state are as follows: 
\begin{align*}
X_{t+1}^d &=  \begin{cases} |\lambda| X_t^d + a_{t+1} \, \, \text{if} \, \, U_t = 0, \\
																	a_{t+1} \, \, \text{if} \, \, U_t =1, 
											\end{cases} \\
E_{t+1}^d &= \max(E_t^d - U_t^d,0)								
\end{align*}
with $X_1^d = a_1$ and $E_1^d = K$. The instantaneous cost is given by
\begin{equation*}
\rho(X_t^d,U_t^d) = \begin{cases} X_t^d   \; \;  \text{if} \, \, U_t = 0 \\
																	0 \; \; \text{if} \, \, U_t =1. 
							 \end{cases}										
\end{equation*}
The deterministic control problem can be stated as follows.
\begin{problem}
\label{prob:deterministic}
Determine  a control sequence $U^d_{1:T}$  to minimize the cost 
\[J^d(U^d_{1:T}) := \max_{t \in \mathcal{T}} \rho(X^d_t,U^d_t). \]
\end{problem}
We are interested in the above deterministic control problem because of the following lemma.
\begin{lemma}\label{lem:cost_equal}
The optimal cost for the original problem (i.e, Problem \ref{prob:sensor}), the  coordinator's problem (i.e, Problem \ref{prob:coordinator}) and the deterministic control problem  (i.e, Problem \ref{prob:deterministic}) are equal. That is,
\begin{equation}
\min_{\mathbf{f},\mathbf{g}} {J} (\mathbf{f},\mathbf{g}) =\min_{\mathbf{d},\mathbf{g}} \hat{J} (\mathbf{d},\mathbf{g}) = \min_{U^d_{1:T}} J^d(U^d_{1:T}).  \label{eq:prob_equal}
\end{equation}
\end{lemma}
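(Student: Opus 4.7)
My plan is to prove the two equalities in \eqref{eq:prob_equal} separately, leveraging the reduced deterministic recursion \eqref{eq:reduced_dp1}--\eqref{eq:reduced_dp2} that was already derived in Step 2. Both equalities come down to identifying value functions on matching initial states.

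For the first equality $\min_{\mathbf f,\mathbf g} J(\mathbf f,\mathbf g) = \min_{\mathbf d,\mathbf g} \hat J(\mathbf d,\mathbf g)$, I would make precise the informal equivalence sketched just after Problem \ref{prob:coordinator}. By Lemma \ref{lem:1} I may restrict attention to sensor strategies of the form $U_t = f_t(X_t, Y_{1:t-1})$. Given such an $\mathbf f$, define prescriptions $d_t(Y_{1:t-1})(\cdot) := f_t(\cdot, Y_{1:t-1})$; conversely, given $\mathbf d$, define $f_t(X_t, Y_{1:t-1}) := d_t(Y_{1:t-1})(X_t)$. Under either correspondence the sample-path trajectories of $U_{1:T}, Y_{1:T}, \hat X_{1:T}$ coincide for every realization of $N_{1:T}$, so $J(\mathbf f,\mathbf g) = \hat J(\mathbf d,\mathbf g)$ identically; taking infima yields the equality.

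For the second equality, I would write the standard Bellman recursion for the deterministic minimax problem. Let $J^d_t(x, e)$ denote the optimal value of $\max_{s=t}^{T}\rho(X^d_s, U^d_s)$ starting from $(X^d_t, E^d_t) = (x, e)$. The terminal stage gives $J^d_T(x, e) = 0$ for $e > 0$ (transmit is optimal) and $J^d_T(x, 0) = x$. For $t < T$ and $e > 0$ the controller picks the smaller of the cost under $U^d_t = 0$, namely $\max(x, J^d_{t+1}(|\lambda| x + a_{t+1}, e))$, and the cost under $U^d_t = 1$, namely $\max(0, J^d_{t+1}(a_{t+1}, e-1)) = J^d_{t+1}(a_{t+1}, e-1)$, where the last equality uses $J^d \geq 0$; for $e = 0$ the action $U^d_t = 0$ is forced. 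Comparing term by term with \eqref{eq:reduced_dp1}--\eqref{eq:reduced_dp2} shows that $J^d_t$ satisfies the same recursion and boundary condition as $\tilde W_t$, so a backward induction on $t$ gives $J^d_t \equiv \tilde W_t$.

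Finally, I would match initial conditions. Since the coordinator has no information at $t=1$, $\Theta_1 = \rg{X_1}$; because $X_1 = N_1$ with $\|N_1\| \leq a_1$, this is the closed ball of radius $a_1$ centered at the origin, so $r^*(\Theta_1) = a_1$. By Lemma \ref{lem:dp} and Lemma \ref{lem:w_tilde} the optimal cost of Problem \ref{prob:coordinator} is $W_1(\Theta_1, K) = \tilde W_1(a_1, K)$, while the deterministic problem starts at $X^d_1 = a_1$, $E^d_1 = K$, giving optimal cost $J^d_1(a_1, K) = \tilde W_1(a_1, K)$ by the identification above. The main subtlety I anticipate is the bookkeeping around the corner cases --- the terminal stage and the forced $U^d_t = 0$ branch when $e = 0$ --- where the $\max(0,\cdot)$ collapse in the transmit branch must be carefully aligned with \eqref{eq:reduced_dp2}; once these are handled, the rest of the argument is mechanical.
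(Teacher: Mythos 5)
Your proposal is correct and follows essentially the same route as the paper: the first equality via the sample-pathwise correspondence between sensor strategies and prescription strategies, and the second by showing the deterministic problem's Bellman recursion coincides with the reduced recursion \eqref{eq:reduced_dp1}--\eqref{eq:reduced_dp2} and matching the initial state $(a_1,K)$. The only cosmetic difference is that the paper obtains the deterministic dynamic program by invoking Theorem \ref{thm:centralized} rather than writing the Bellman recursion directly, and your treatment of the first equality and of $r^*(\Theta_1)=a_1$ is somewhat more explicit than the paper's.
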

\begin{proof}We have already discussed that Problems \ref{prob:sensor} and \ref{prob:coordinator} are equivalent, so we will focus on the second equality in \eqref{eq:prob_equal}.
Since the deterministic control problem is a special case of the minimax problem of Section \ref{sec:central},  we can use Theorem \ref{thm:centralized}  to write the following dynamic program for it:
\begin{align*}
\tilde{W}^d_T (x_T^d,e_T^d) &:= \begin{cases}
							 0 \; \;\text{if} \, e_T^d >0, \\
							 x_T^d \; \; \text{if} \, e_T^d = 0,
							 \end{cases} \\
\tilde{W}^d_t (x_t^d,e_t^d) &:= \min_{u_t^d \in \mathcal{U}(e_t^d)}\left\{ \max \left(\rho(x_t^d,u_t^d), \tilde{W}^d_{t+1} (x_{t+1}^d ,e_{t+1}^d)\right)\right\},
\end{align*}
for $t<T$; with $\tilde{W}^d_1 (a_1,K)$ being the optimal cost for Problem \ref{prob:deterministic}. \\
Comparing the above dynamic program with \eqref{eq:reduced_dp1}-\eqref{eq:reduced_dp2}, it is easy to see that $\tilde{W}^d_t (x,e) = \tilde{W}_t (x,e) , \forall x,e,t$. From Theorem \ref{thm:centralized}, the optimal cost of Problem \ref{prob:deterministic} is \[\tilde{W}^d_1 (a_1,K) = \tilde{W}_1 (a_1,K),\] which is the same as the optimal cost of Problem \ref{prob:coordinator}. 

\end{proof}

\textbf{Step 4: Optimal transmission and estimation strategies for Problem \ref{prob:sensor} } -
We can now identify optimal transmission and estimation strategies for Problem \ref{prob:sensor}. We start with the estimation strategy. We define $\tilde{X}_0 = 0$ and for $t \in \mathcal{T}$, 
\begin{equation}\label{eq8}
\tilde{X}_t = \begin{cases}
					Y_t \, & \text{if}\, U_t = 1, \\
					\lambda A \tilde{X}_{t-1} \, \, &\text{if}\, U_t = 0.
					\end{cases}
\end{equation}

\begin{lemma}\label{lem4}
In Problem \ref{prob:sensor} and Problem \ref{prob:coordinator}, the globally optimal estimation strategy is $g_t^*(Y_{1:t}) = \tilde{X}_t$, for $t \in \mathcal{T}$. 
\end{lemma}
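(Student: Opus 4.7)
The plan is to read the optimal estimator directly off the dynamic program in Lemma \ref{lem:dp}. In both \eqref{v_T} and \eqref{v_t}, the only place $\hat{x}_t$ appears inside the $\inf$ is through $\|x_t - \hat{x}_t\|$; in particular, the future cost $W_{t+1}(\phi_t(\pi_t),\tilde{e}_t)$ in \eqref{v_t} does not depend on $\hat{x}_t$. Hence for every $\pi_t$ the problem of choosing an optimal estimate reduces to
\[
\hat{x}_t \in \argmin_{\hat{x} \in \R^n} \sup_{x_t \in \pi_t} \|x_t - \hat{x}\|,
\]
so any Chebyshev center of $\pi_t$ (i.e.\ any point $c$ with $\sup_{x_t \in \pi_t} \|x_t - c\| = r^*(\pi_t)$) is an optimal estimate. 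The claim therefore reduces to showing that, under the optimal prescription strategy, $\tilde{X}_t$ is a Chebyshev center of $\Pi_t$ for each $t$.

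I would prove this by induction on $t$. The base case at $t=1$ is either trivial (if $U_1 = 1$ then $\Pi_1 = \{Y_1\}$ and $\tilde{X}_1 = Y_1$ is the unique Chebyshev center) or follows by noting that if $U_1 = 0$, then by Lemma \ref{lem:prop Q}(2) the optimal prescription is $\gamma^{none}$ and $\Pi_1 = [[X_1]]$, which is the closed ball of radius $a_1$ centered at the origin $= \lambda A \tilde{X}_0$. For the inductive step, assume $\tilde{X}_{t-1}$ is a Chebyshev center of $\Pi_{t-1}$, so $\Pi_{t-1}$ lies in the closed ball of radius $r^*(\Pi_{t-1})$ around $\tilde{X}_{t-1}$. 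I consider two cases. If $U_t = 1$, then $Y_t = X_t$ and $\Pi_t = \{Y_t\}$, so $\tilde{X}_t = Y_t$ is the Chebyshev center. If $U_t = 0$, Lemma \ref{lem:prop Q}(2) lets me restrict attention to the prescription $\gamma^{none}$, under which no signalling occurs and $\Pi_t = \Theta_t = \phi_{t-1}(\Pi_{t-1})$. Using the definition of $\phi_{t-1}$ in \eqref{phi} and the orthogonality of $A$, every point of $\phi_{t-1}(\Pi_{t-1})$ has the form $\lambda A x_{t-1} + n_t$ with $x_{t-1} \in \Pi_{t-1}$ and $\|n_t\|\le a_t$, so
\[
\|\lambda A x_{t-1} + n_t - \lambda A \tilde{X}_{t-1}\| \le |\lambda|\,\|x_{t-1} - \tilde{X}_{t-1}\| + a_t \le |\lambda| r^*(\Pi_{t-1}) + a_t.
\]
By Lemma \ref{lem:diam} the right-hand side is exactly $r^*(\Theta_t)$, so $\lambda A \tilde{X}_{t-1} = \tilde{X}_t$ is a Chebyshev center of $\Pi_t$, completing the induction.

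The main obstacle is the handling of the non-transmission case: a priori, the estimator's knowledge that the sensor chose not to transmit could shrink $\Pi_t$ to a strict and possibly irregular subset of $\Theta_t$ whose Chebyshev center need not admit the clean recursion in \eqref{eq8}. This is exactly the signalling effect emphasised in the introduction, and the key leverage point is Lemma \ref{lem:prop Q}(2), which collapses the optimisation over prescriptions to the two non-signalling choices $\gamma^{all}$ and $\gamma^{none}$. Once this is in hand, the Chebyshev-center identification reduces to the elementary ball-containment calculation above, which pairs with Lemma \ref{lem:diam} to certify optimality of the radius. Finally, since Problems \ref{prob:sensor} and \ref{prob:coordinator} are equivalent and the sensor's transmission strategy does not appear in the estimator's subproblem, the same $\tilde{X}_t$ serves as the optimal estimate in both formulations.
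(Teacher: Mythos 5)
Your proof is correct and follows essentially the same route as the paper's: an induction over $t$ that invokes Lemma \ref{lem:prop Q}(2) to restrict attention to the non-signalling prescriptions $\gamma^{all}$ and $\gamma^{none}$, and then identifies $\tilde{X}_t$ as the optimal center of $\Pi_t$ in the dynamic program of Lemma \ref{lem:dp}. The only (harmless) difference is that the paper tracks the stronger invariant that $\Pi_t$ is literally a ball centered at $\tilde{X}_t$, whereas you track only the Chebyshev-center property and certify the radius via Lemma \ref{lem:diam}; both suffice.
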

\begin{proof}
See Appendix \ref{sec:main_theorem}. 
\end{proof}

Let $U_t^{d*}$ be an optimal open loop control sequence for Problem \ref{prob:deterministic}. Since Problem \ref{prob:deterministic} is an optimal control problem with determinstic dynamics we know that there exists such an open loop strategy and can be computed via the dynamic program. 
We can now identify the optimal strategies for Problem \ref{prob:sensor}.  
\begin{theorem}\label{thm:main}
Let $\mathbf{g}^*$ be the estimation strategy as defined in Lemma \ref{lem4} and $\mathbf{f}^*$ be defined as follows:
\begin{align*}
f^*_t(X_t,E_t,Y_{1:t-1}) = U_t^{d*}
\end{align*}
where $U_t^{d*}$ is an optimal open loop control sequence for Problem \ref{prob:deterministic}. 
Then, $(\mathbf{f}^*,\mathbf{g^*})$ are globally optimal strategies for Problem \ref{prob:sensor}.
\end{theorem}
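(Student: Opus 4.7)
The plan is to show $J(\mathbf{f}^*, \mathbf{g}^*) = \min_{\mathbf{f},\mathbf{g}} J(\mathbf{f}, \mathbf{g})$. The lower bound is trivial by definition of the infimum, so the substantive work is the reverse inequality. By Lemma \ref{lem:cost_equal} the common optimal value of Problems \ref{prob:sensor}, \ref{prob:coordinator} and \ref{prob:deterministic} is $J^d(U^{d*}_{1:T})$, so it suffices to prove
\begin{equation*}
J(\mathbf{f}^*, \mathbf{g}^*) \;\leq\; J^d(U^{d*}_{1:T}).
\end{equation*}
Before doing so I would verify that $\mathbf{f}^*$ is feasible in Problem \ref{prob:sensor}: the energy recursion in the deterministic problem is identical to \eqref{eq:Et}, and $U^{d*}_{1:T}$ is feasible there, so the open-loop sequence respects the budget along every realization.

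Next, fix an arbitrary realization of $N_{1:T}$. Let $\tau(t) := \max\{k \leq t : U^{d*}_k = 1\}$, with the convention $\tau(t) := 0$ and $X_0 := 0$ when no transmission has occurred by time $t$. Unrolling the estimator recursion \eqref{eq8} under the fixed schedule yields $\tilde{X}_t = (\lambda A)^{t-\tau(t)} X_{\tau(t)}$, and unrolling the source dynamics \eqref{eq:Xt} gives
\begin{equation*}
X_t \;=\; (\lambda A)^{t-\tau(t)} X_{\tau(t)} \;+\; \sum_{k=\tau(t)+1}^{t} (\lambda A)^{t-k} N_k.
\end{equation*}
Subtracting, using $A^\top A = I$ (so that $\|(\lambda A)^j z\| = |\lambda|^j \|z\|$) together with $\|N_k\| \leq a_k$, the triangle inequality gives $\|X_t - \tilde{X}_t\| \leq \sum_{k=\tau(t)+1}^{t} |\lambda|^{t-k} a_k$.

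A short induction on the deterministic state dynamics verifies that this right-hand side coincides with $X^d_t$ whenever $U^{d*}_t = 0$, while at transmission times ($U^{d*}_t = 1$, so $\tau(t) = t$) the estimation error is exactly zero. In every case, $\|X_t - \tilde{X}_t\| \leq \rho(X^d_t, U^{d*}_t)$ for every $t$ and every noise realization. Taking the maximum over $t$ and the supremum over $N_{1:T}$ on both sides then yields the required inequality $J(\mathbf{f}^*, \mathbf{g}^*) \leq J^d(U^{d*}_{1:T})$, which together with Lemma \ref{lem:cost_equal} closes the argument; Lemma \ref{lem4} is what justifies using $\mathbf{g}^*$ as the estimator strategy.

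The main obstacle is the matching of the accumulated-noise bound to the scalar deterministic recursion. Unrolling the two coupled recursions is mechanical, but the collapse from $\|\sum_k (\lambda A)^{t-k} N_k\| \leq \sum_k |\lambda|^{t-k} a_k$ depends crucially on orthogonality of $A$; without it, the operator norm of $(\lambda A)^j$ could exceed $|\lambda|^j$ and the pointwise inequality $\|X_t - \tilde{X}_t\| \leq X^d_t$ could fail, breaking the reduction to Problem \ref{prob:deterministic}. Everything else — feasibility, the trivial lower bound, and the invocation of Lemmas \ref{lem4} and \ref{lem:cost_equal} — is routine.
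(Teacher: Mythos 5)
Your proposal is correct and follows essentially the same route as the paper: both reduce to Lemma \ref{lem:cost_equal}, unroll the source and estimator recursions between consecutive transmission times, use orthogonality of $A$ to bound the accumulated noise by $\sum_k |\lambda|^{t-k} a_k$, and identify this with the deterministic state $X^d_t$. The only cosmetic difference is that you establish the pointwise inequality $\|X_t - \tilde{X}_t\| \leq \rho(X^d_t, U^{d*}_t)$ (which suffices), whereas the paper organizes the argument interval-by-interval and shows the worst-case error actually equals $J^d(U^{d*}_{1:T})$.
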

\begin{proof}
See Appendix \ref{sec:main_theorem}. 
\end{proof}

Theorem \ref{thm:main} establishes that the globally optimal transmission strategy to minimize the worst-case instantaneous cost is an open-loop strategy that transmits at  pre-determined time instants. Thus, even though the sensor has access to the state and transmission history, this information is not used by the optimal transmission strategy.

\begin{remark}
We can compare the nature of optimal strategies in Theorem \ref{thm:main} with the optimal strategies in the stochastic remote estimation problem in \cite{nayyar2013optimal,lipsa}. The optimal estimation strategy obtained in our minimax setup is identical to the one obtained in the stochastic case considered in \cite{nayyar2013optimal,lipsa}. However, the optimal transmission strategy in \cite{nayyar2013optimal,lipsa} is a threshold-based strategy in contrast to the deterministic strategy obtained in our setup. 
\end{remark}

\subsection{Homogenous noise}
Consider the case when all the uncertain noise variables take values in the ball of same size i.e $a_t = a$ for all $t \in \mathcal{T}$. It turns out that transmitting at uniformly spaced intervals is optimal in this case as made precise in the following lemma. 

\begin{lemma}\label{lem:uniform_transmit}
Define $\Delta := \left\lceil \frac{T+1}{K+1} \right\rceil$. Then, 
\begin{enumerate}
\item The optimal cost for Problem \ref{prob:sensor} under homogenous noise model is, 
\begin{align*}
 c^*(K,T,a,\lambda) := \begin{cases}
 \frac{|\lambda|^{\Delta-1} - 1}{|\lambda| - 1} a  ~~ \text{when} ~~ |\lambda| \neq 1, \\
  (\Delta - 1)a ~~ \text{when} ~~ |\lambda| = 1
  \end{cases}
\end{align*}
\item An optimal control sequence for Problem \ref{prob:sensor} under homogenous noise model is given as follows:
\begin{align}\label{unif_policy}
U_t = \begin{cases}
		1\, \,  &\text{if}~ t \in \{\Delta,2\Delta,\ldots,K\Delta \} \cap \mathcal{T}  \\
		 0 \, \, & \text{otherwise}. 
		\end{cases}
\end{align}
\end{enumerate}
\end{lemma}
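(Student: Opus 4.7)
By Lemma \ref{lem:cost_equal}, the optimal cost for Problem \ref{prob:sensor} equals that of the deterministic control problem (Problem \ref{prob:deterministic}) with constant noise bound $a$. The plan is to analyze Problem \ref{prob:deterministic} directly, show that the cost depends only on the ``gap structure'' of the chosen transmission times, and then solve the resulting combinatorial problem via pigeonhole.

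First, I would fix an arbitrary control sequence $U^d_{1:T}$ that transmits at times $\tau_1<\tau_2<\cdots<\tau_m$ with $m\le K$, and introduce the boundary conventions $\tau_0 = 0$, $\tau_{m+1}=T+1$, together with the gaps $g_i := \tau_{i+1}-\tau_i$ for $i=0,\ldots,m$. Unrolling the deterministic dynamics, immediately after each transmission (or at $t=1$) the state is $a$, and between consecutive transmissions one has $X^d_{\tau_i+k} = \sum_{j=0}^{k-1}|\lambda|^j a$ for $1\le k\le g_i-1$. The instantaneous cost $\rho(X^d_t,U^d_t)$ vanishes at transmission times and equals $X^d_t$ otherwise, so the maximum cost contributed by gap $i$ occurs at $t=\tau_{i+1}-1$ (or at $t=T$ for $i=m$) and equals $\frac{|\lambda|^{g_i-1}-1}{|\lambda|-1}\,a$ when $|\lambda|\ne 1$, or $(g_i-1)a$ when $|\lambda|=1$. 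Consequently
\[
J^d(U^d_{1:T}) \;=\; \sigma\bigl(\max_{0\le i\le m} g_i\bigr),
\]
where $\sigma$ is the strictly increasing map $L\mapsto \frac{|\lambda|^{L-1}-1}{|\lambda|-1}a$ (resp.\ $(L-1)a$). Hence minimizing the cost is equivalent to minimizing $\max_i g_i$ over all admissible gap sequences.

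Next, I would establish the lower bound. The gaps are positive integers summing to $\sum_{i=0}^m g_i = T+1$, and there are $m+1\le K+1$ of them. A pigeonhole argument then gives $\max_i g_i \ge \lceil(T+1)/(m+1)\rceil \ge \lceil(T+1)/(K+1)\rceil = \Delta$, and monotonicity of $\sigma$ yields $J^d(U^d_{1:T}) \ge \sigma(\Delta)$, which is exactly the claimed formula $c^*(K,T,a,\lambda)$.

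Finally, I would verify that the candidate policy \eqref{unif_policy} attains this lower bound. Its transmission times are $\tau_i = i\Delta$ for $1\le i\le j$, where $j := \min\{K,\lfloor T/\Delta\rfloor\}$ is the number of indices that actually fall in $\mathcal T$. By construction $g_0 = g_1 = \cdots = g_{j-1} = \Delta$, and the last gap is $g_j = T+1-j\Delta$. The one place that needs a small case check — and the main (though minor) obstacle — is confirming $g_j\le \Delta$: if $j<K$ then $j$ is maximal with $j\Delta\le T$, so $(j+1)\Delta \ge T+1$ and $g_j\le \Delta$; if $j=K$ then the definition $\Delta=\lceil(T+1)/(K+1)\rceil$ gives $(K+1)\Delta\ge T+1$, i.e.\ $g_K=T+1-K\Delta\le \Delta$. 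Moreover $g_j\ge 1$ since $j\Delta\le T$. Thus $\max_i g_i=\Delta$, the policy attains the lower bound, and invoking Theorem \ref{thm:main} (which guarantees that an optimal open-loop deterministic sequence yields optimal $(\mathbf f^*,\mathbf g^*)$) completes the proof.
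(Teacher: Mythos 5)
Your proposal is correct and follows essentially the same route as the paper's proof: reduce to the deterministic problem, express the cost as a strictly increasing function of the maximum inter-transmission gap, lower-bound that gap by $\Delta$ via pigeonhole on gaps summing to $T+1$, and verify the uniformly spaced policy attains $\max_i g_i = \Delta$. Your write-up is in fact slightly more careful than the paper's in the final step (the explicit case check on the last gap $g_j$), but there is no substantive difference in approach.
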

\begin{proof}
See Appendix \ref{sec:lem_uniform_transmit}.
\end{proof}

\begin{remark}
In the case of homogenous noise, it is possible that the sensor does not utilize all the $K$ available transmission opportunities under the transmission strategy $f^*$. For example, when $T=5, K=3$, the sensor will transmit only twice at $t=2,4$. Thus, the worst-case error achieved in this case would be the same even if $K = 2$. Therefore, one could also ask the following question: What is the minimum number of transmission opportunities ($K^*$) required so that the worst-case error is at most $\epsilon$? $K^*$ can be computed as follows:
\begin{equation}
K^* = \min \left\lbrace K \geq 1: c^*(K,T,a,\lambda) \leq \epsilon \right\rbrace
\end{equation} 
\end{remark}

\begin{remark}
Consider the problem where the estimator requests transmissions instead of the sensor deciding when to transmit. The cost of this problem is lower bounded by the cost of Problem \ref{prob:sensor} because the sensor has more information to make the transmission decision than the estimator. Moreover, since the optimal scheduling strategy obtained for Problem \ref{prob:sensor} is an open loop strategy, it can also be implemented in this new problem. Therefore, the results obtained for Problem \ref{prob:sensor} also hold for this problem.
\end{remark}

\begin{remark}
Consider the problem where the sensor can observe the source state only $M$ times instead of observing the state at each time with $M \geq K$. In addition to the scheduling strategy, here the sensor must also decide when to observe the source. The cost of this problem is lower bounded by the cost of Problem \ref{prob:sensor} because the sensor has less information in this case compared to Problem \ref{prob:sensor}. Also, since the optimal scheduling strategy for Problem \ref{prob:sensor} is an open loop strategy, the sensor in this problem can take observations at the fixed times when it transmits, thereby achieving the same cost as in Problem \ref{prob:sensor}. Therefore, the results obtained for Problem \ref{prob:sensor} also hold for this problem.
\end{remark}

\begin{remark}
For each $t \in \mathcal{T}$, let $\mathcal{B}_t$ be any set such that $\mathcal{B}_t$ is symmetric (i.e. if $n \in \mathcal{B}_t$ then $-n \in \mathcal{B}_t$) and $\sup_{n_t \in \mathcal{B}_t} ||n_t|| = a_t$. It can be shown that the optimal transmission and estimation strategy remains the same if the noise $N_t$ lies in the set $\mathcal{B}_t$. 
\end{remark}


\section{Conclusion}
\label{sec:conclusion}


 We considered the problem of remote estimation of a non-stochastic source over a finite time horizon where the sensor has a limited communication budget. Our objective was to find jointly optimal scheduling and estimation strategies which minimize the worst-case  maximum instantaneous estimation error over the time horizon. This problem is a decentralized  minimax decision-making problem. Our approach started with  the dynamic program (DP)  for a general centralized  minimax control problem. We framed our decentralized minimax problem from the estimator's perspective and used the common information approach to write down a dynamic program. 
This dynamic program, however, involved minimization over functions. By identifying a key property of the value functions, we were able to characterize the globally optimal strategies. In particular, we show that an open loop transmission strategy and simple Kalman-like estimator are jointly optimal. We also described related problems where the same optimal strategy holds. 



\appendices
\section{Proof of Theorem \ref{thm:centralized}}
\label{app:pf_centralized}

To prove Theorem \ref{thm:centralized}, we first derive some useful properties.
Recall that $N = (N_1,N_2,\ldots,N_T)$ is the collection of all the noise variables in the system. Note that given the strategy $\eta$, the state $S_r$ and the information $Q_r$ can be written down as some function of $N$ for $r \in \mathcal{T}$. Thus, for any function $f$ and $r\geq t$ we can write $\sup_{(S_r,Q_r) | q_t } f(S_r,Q_r) = \sup_{N|q_t}  f(S_r,Q_r)$

For any strategy $\eta$, we define its ``cost-to-go'' function at time $t$ as
\begin{align}\label{v_eta}
V^{\eta}_t(q_t) := \sup_{N|q_t} \max_{r\geq t} \rho_r(S_r,\eta_r(Q_r)),
\end{align}
which is a function of the realization $q_t$ of available information at time $t$. 
Then it is clear that the worst case cost of strategy $\eta$ is 
\begin{align}
\sup_{N} \max_{t\in\mathcal{T}} \rho_t(S_t,\eta_t(Q_t))
= \sup_{Q_1} V^{\eta}_1(Q_1) .
\end{align}

We also define the value function of the problem at $t$ to be
\begin{align}
 &V^*_T(q_T) :=  \inf_{a_T \in \mathcal{A} (s_T^o)}\Big\{\sup_{ {N}|{q_T} }\rho_t(S_T,a_T)\Big\},
\\
& V^*_t(q_t) :=  \inf_{a_t \in \mathcal{A} (s_t^o)} \left\{
\sup_{{N} |{(q_t,a_t)} } 
 \max\left(\rho_t(S_t,a_t),V^*_{t+1}(Q_{t+1})\right)
\right\}
\end{align}
We have the following result.
\begin{lemma}
\label{lm:valuefunctions}
For any strategy $\eta$, at each time $t$ and for every realization $q_t$, we have
\begin{align}
 V^*_t(q_t) \leq V^{\eta}_t(q_t).
\end{align}
\end{lemma}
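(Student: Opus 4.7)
The plan is to prove this by backward induction on $t$, treating $V^*_t$ and $V^\eta_t$ as defined on realizations $q_t$. The key observation is that $V^\eta_t$ satisfies a Bellman-like recursion analogous to the defining equation of $V^*_t$, and the infimum in the definition of $V^*_t$ can be upper-bounded by the particular action $a_t = \eta_t(q_t)$.

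For the base case $t=T$, since $\eta_T(q_T)$ is a valid element of $\mathcal{A}(s_T^o)$, we have $V^{\eta}_T(q_T) = \sup_{N|q_T}\rho_T(S_T,\eta_T(q_T)) \geq \inf_{a_T\in\mathcal{A}(s_T^o)} \sup_{N|q_T}\rho_T(S_T,a_T) = V^*_T(q_T)$. (A minor step is noting that conditioning the noise range on the pair $(q_t,a_t)$ coincides with conditioning on $q_t$ alone, since $a_t$ is a free action variable rather than a function of $N$.)

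For the inductive step, I would first establish the recursion
\begin{align*}
V^\eta_t(q_t) = \sup_{N|q_t}\max\bigl(\rho_t(S_t,\eta_t(q_t)),\, V^\eta_{t+1}(Q_{t+1})\bigr).
\end{align*}
To derive this, split $\max_{r\geq t}$ into the term at $r=t$ and $\max_{r\geq t+1}$, then apply Property 2 (the tower property for supremum) to write $\sup_{N|q_t} = \sup_{Q_{t+1}|q_t}\sup_{N|Q_{t+1}}$; the identity $\sup_X\max(A(X),B(X)) = \max(\sup_X A,\sup_X B)$ then lets us recognize the inner supremum of $\max_{r\geq t+1}\rho_r$ as $V^\eta_{t+1}(Q_{t+1})$. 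Applying the induction hypothesis $V^\eta_{t+1}\geq V^*_{t+1}$ pointwise inside the max and then using $\eta_t(q_t)$ as a specific choice in the infimum defining $V^*_t$ gives
\begin{align*}
V^\eta_t(q_t) \;\geq\; \sup_{N|q_t}\max\bigl(\rho_t(S_t,\eta_t(q_t)),\,V^*_{t+1}(Q_{t+1})\bigr) \;\geq\; V^*_t(q_t).
\end{align*}

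The main obstacle, and the only non-routine step, is justifying the recursion for $V^\eta_t$ rigorously using the tower property (Property 2). One must be careful that $\rho_t(S_t,\eta_t(q_t))$ may depend on the hidden component $S_t^h$ and hence still vary under the inner supremum $\sup_{N|Q_{t+1}}$, so one cannot pull it out as a constant; the argument instead relies on the exchange of sup and max above. Everything else is bookkeeping: identifying that $q_t$ is determined by $q_{t+1}$, that $\eta_t(q_t)$ is determined by $q_t$, and that conditioning on an action variable is vacuous given $q_t$.
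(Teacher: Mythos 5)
Your proposal is correct and follows essentially the same route as the paper: backward induction, the base case at $T$ by choosing $a_T=\eta_T(q_T)$ in the infimum, and the inductive step via splitting $\max_{r\ge t}$ into the $r=t$ term and the tail, interchanging $\sup$ and $\max$, and applying Property 2 to recognize the tail as $\sup V^{\eta}_{t+1}(Q_{t+1})$ before invoking the induction hypothesis. The only cosmetic difference is that the paper writes the recursion as a maximum of two separate suprema rather than a single supremum of a maximum, which is equivalent by the same $\sup$--$\max$ exchange identity you cite.
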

\begin{proof}
The proof is done by induction. At $T$ we have
\begin{align}
 & V^*_T(q_T) 
\notag\\
 = & \inf_{a_T \in \mathcal{A} (s_T^o)}\Big\{\sup_{{N}|{q_T}}\rho_t(S_T,a_T)\Big\} \
 \notag\\
 \leq & \sup_{{N}|{q_T}}\rho_T(S_T,\eta_T(q_T))= V^{\eta}_T(q_T).
\end{align}
Suppose the lemma is true at $t+1$. Then at $t$ we have
\begin{align}
&V^{\eta}_t(q_t) \notag\\ = & \sup_{{N}|{q_t}} \max_{r\geq t} \rho_r(S_r,\eta_r(Q_r))
\notag\\
= &\sup_{{N}|{q_t}} \max\Big(\rho_t(S_t,\eta_t(q_t)), \max_{r\geq t+1} \rho_r(S_r,\eta_r(Q_r))\Big)
\notag\\
= &\max\Big(\sup_{{N}|{q_t}} \rho_t(S_t,\eta_t(q_t)), \sup_{{N}|{q_t}} \max_{r\geq t+1} \rho_r(S_r,\eta_r(Q_r))\Big)
\label{eq:pfVgamma1}
\end{align}
From Property \ref{prop:smoothing} we get
\begin{align}
& \sup_{{N}|{q_t}} \max_{r\geq t+1} \rho_r(S_r,\eta_r(Q_r)) 
\notag\\
= & \sup_{{Q_{t+1}}|{(q_t,A_t=\eta_t(q_t) )}} \left( \sup_{{N}|{(Q_{t+1},q_t,A_t=\eta_t(q_t) )}} \max_{r\geq t+1} \rho_r(S_r,\eta_r(Q_r)) \right)
\notag\\
= & \sup_{{Q_{t+1}}|{(q_t,A_t=\eta_t(q_t) )}} V^{\eta}_{t+1}(Q_{t+1}).
\label{eq:pfVgamma2}
\end{align}

Now from \eqref{eq:pfVgamma1}-\eqref{eq:pfVgamma2} and the induction hypothesis we get
\begin{align}
&V^{\eta}_t(q_t) \notag\\
= &\max\Big(\sup_{{N}|{q_t}} \rho_t(S_t,\eta_t(q_t)), \sup_{{Q_{t+1}}|{(q_t,A_t=\eta_t(q_t) )}} V^{\eta}_{t+1}(Q_{t+1})\Big)
\notag\\
\geq &\max\Big(\sup_{{N}|{q_t}} \rho_t(S_t,\eta_t(q_t)), \sup_{{Q_{t+1}}|{(q_t,A_t=\eta_t(q_t) )}} V^{*}_{t+1}(Q_{t+1})\Big)
\notag\\
\geq &V^{*}_t(q_t).
\end{align}

\end{proof}

It is straightforward to see that a strategy $\eta^*$ achieving infimum at each stage in the definition of $V^*_t(q_t)$ will be optimal and its cost will be $\sup_{{Q_1}} V^{*}_1(Q_1)$.

Let $\Theta_t =  [[S_{t}|Q_t]]$ be the conditional range of the state at time $t$. Recall that $\Pi_t =  [[S^h_{t}|Q_t]]$.  Note that $\Pi_t$ and $\Theta_t$ are related as follows
\begin{equation}\label{eq:theta}
\Theta_t =  [[S_{t}^h,S^o_t|Q_t]] = [[S_{t}^h|Q_t]] \times \{S^o_t\} = \Pi_t \times  \{S^o_t\}.
\end{equation}

The evolution of $\Theta_t$ has the following feature.
\begin{lemma}
\label{lm:piupdate}
There exists a function $\phi_t(\theta_t,a_t,o_{t+1},s^o_{t+1})$ such that
\begin{align}
\Theta_{t+1} = \phi_t(\Theta_t,A_t,O_{t+1},S^o_{t+1}).
\end{align}
\end{lemma}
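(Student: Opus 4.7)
The plan is to unwind the definition $\Theta_{t+1} = \rgc{S_{t+1}}{Q_{t+1}}$ and show that it depends on $Q_{t+1}=(Q_t,A_t,O_{t+1},S^o_{t+1})$ only through the quadruple $(\Theta_t,A_t,O_{t+1},S^o_{t+1})$. Since $S_{t+1}=f_{t+1}(S_t,A_t,N_{t+1})$ and $A_t=\eta_t(Q_t)$ is pinned down by the realization $q_t$, the set $\rgc{S_{t+1}}{q_{t+1}}$ equals the image under $f_{t+1}(\cdot,a_t,\cdot)$ of the conditional range $\rgc{(S_t,N_{t+1})}{q_{t+1}}$. So the first step is to compute this joint conditional range.

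Next, I would use the fact that $Q_{t+1}$ extends $Q_t$ by $(O_{t+1},S^o_{t+1})$, together with $O_{t+1}=h_{t+1}(S_t,A_t,N_{t+1})$ and the fact that $S^o_{t+1}$ is the observable coordinate of $f_{t+1}(S_t,A_t,N_{t+1})$. This gives
\begin{align*}
\rgc{(S_t,N_{t+1})}{q_{t+1}}
= \bigl\{(s_t,n_{t+1}) \in \rgc{(S_t,N_{t+1})}{q_t} :\, & h_{t+1}(s_t,a_t,n_{t+1})=o_{t+1},\\
& \text{obs.~part of } f_{t+1}(s_t,a_t,n_{t+1})=s^o_{t+1}\bigr\}.
\end{align*}
The hard part is to show that $\rgc{(S_t,N_{t+1})}{q_t}$ itself can be expressed purely in terms of $\Theta_t=\rgc{S_t}{q_t}$ and $\rg{N_{t+1}}$. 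For this I would invoke the unrelatedness of the noise sequence: since $N_{t+1}$ is unrelated to $N_{1:t}$ and $(S_t,Q_t)$ is a function of $N_{1:t}$, Property \ref{prop:cond_indep} yields $\rg{N_{t+1},S_t,Q_t}=\rg{N_{t+1}}\times\rg{S_t,Q_t}$, and hence
\begin{align*}
\rgc{(S_t,N_{t+1})}{q_t} = \rgc{S_t}{q_t}\times \rgc{N_{t+1}}{q_t} = \Theta_t\times\rg{N_{t+1}}.
\end{align*}

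Putting the two displays together, one obtains the explicit formula
\begin{align*}
\Theta_{t+1} = \bigl\{ f_{t+1}(s_t,a_t,n_{t+1}) :\ & s_t\in\Theta_t,\ n_{t+1}\in\rg{N_{t+1}},\\
& h_{t+1}(s_t,a_t,n_{t+1})=o_{t+1},\ \text{obs.~part of } f_{t+1}(s_t,a_t,n_{t+1})=s^o_{t+1}\bigr\},
\end{align*}
whose right-hand side manifestly depends only on $(\Theta_t,A_t,O_{t+1},S^o_{t+1})$. Declaring this map to be $\phi_t$ proves the lemma. The only subtle step is the use of Property \ref{prop:cond_indep} to decouple $N_{t+1}$ from the past; the remaining work is just rewriting the defining set for $\Theta_{t+1}$.
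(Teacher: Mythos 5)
Your proposal is correct and follows essentially the same route as the paper's proof: both unwind $\Theta_{t+1}=\rgc{S_{t+1}}{Q_t,A_t,O_{t+1},S^o_{t+1}}$ as the image under $f_{t+1}(\cdot,a_t,\cdot)$ of the pairs $(s_t,n_{t+1})$ consistent with the new observations, and both use the unrelatedness of $N_{t+1}$ to the past together with Property \ref{prop:cond_indep} to factor $\rgc{S_t,N_{t+1}}{q_t}$ as $\Theta_t\times\rg{N_{t+1}}$. The only cosmetic difference is that the paper bundles $(O_{t+1},S^o_{t+1})$ into a single map $\tilde h_{t+1}$ while you keep the two consistency constraints separate.
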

\begin{proof}
We can write $(O_{t+1},S^o_{t+1}) = \tilde h_{t+1}(S_{t},A_t,N_{t+1})$ for some function $\tilde h_{t+1}$. Under any strategy $\eta$,
\begin{align}
&\Theta_{t+1} \nonumber\\
=
& \rg{S_{t+1}|Q_{t+1}}
			\notag\\
		  =& \rg{ S_{t+1}|Q_t, A_t, O_{t+1},S^o_{t+1} }
		  \notag\\
		  =& \rg{ f_{t+1}(S_t,A_t,N_{t+1})|Q_t, \tilde h_{t+1}(S_{t},A_t,N_{t+1})=(O_{t+1},S^o_{t+1})}
			\notag\\
          =& \Big\{f_{t+1}(s_t,A_t,n_{t+1}): \tilde h_{t+1}(s_{t},A_t,n_{t+1})=(O_{t+1},S^o_{t+1}),  \notag\\
          &(s_t,n_{t+1}) \in \rg{S_t,N_{t+1}|Q_t}\Big\}
          \notag\\
          =& \Big\{f_{t+1}(s_t,A_t,n_{t+1}): \tilde h_{t+1}(s_{t},A_t,n_{t+1})=(O_{t+1},S^o_{t+1}), \notag\\
          &s_t \in \rg{S_t|Q_t}, n_{t+1} \in \rg{N_{t+1}}\Big\}
\label{eq:pitplus}
\end{align}
where the last equality follows from Property \ref{prop:cond_indep} and the fact that $N_{t+1}$ is unrelated to $S_t$ and $Q_t$. Therefore, \eqref{eq:pitplus} implies that $\Theta_{t+1}$ is a function of $A_t,O_{t+1},S^o_{t+1}$ and $\Theta_t = \rg{S_t|Q_t}$.
\end{proof}

Now let's prove Theorem \ref{thm:centralized}. Its easy to observe using \eqref{eq:theta} that $\Theta_t$ can be completely characterized using $\Pi_t, S_t^o$. Thus, to prove Theorem \ref{thm:centralized}  it suffices to show that the optimal value function depends only on $\Theta_t$.

\begin{proof}[Proof of Theorem \ref{thm:centralized}]
Lemma \ref{lm:valuefunctions} ensures that optimal costs and optimal strategies are characterized   by the dynamic program
\begin{align}
& V^*_T(q_T) =  \inf_{a_T}\Big\{\sup_{s_T \in \theta_T}\rho_t(s_T,a_T)\Big\}
\\
& V^*_t(q_t) = \inf_{a_t}\Big\{
 \max(\sup_{s_t \in \theta_t}\rho_t(s_t,a_t),\sup_{{Q_{t+1}}|{(q_t,a_t)}}V^*_{t+1}(Q_{t+1}))
\Big\}
\end{align}
Therefore, it just remains to show that the above value function at $t$ can be written as a function of $\theta_t$. Then the optimal value will depend only on $\theta_t$ instead of the entire $q_t$. 
This claim about  the value functions is proved by induction.
At $T$, we have
\begin{align}
V^*_T(q_T) = & \inf_{a_T}\Big\{\sup_{s_T \in \theta_T}\rho_t(s_T,a_T)\Big\}
\notag\\
 =: &V^*_T(\theta_T).
\end{align}
Suppose this claim is true at $t+1$. 
From Lemma \ref{lm:piupdate} and the induction hypothesis we have
\begin{align}
& \sup_{{Q_{t+1}}|{(q_t,a_t)}}V^*_{t+1}(Q_{t+1}) 
\notag\\
= &\sup_{{(q_t,a_t,O_{t+1},S^o_{t+1} )}|{(q_t,a_t)}}V^*_{t+1}(\Theta_{t+1}) 
\notag\\
= &\sup_{{(O_{t+1},S^o_{t+1} )}|{(q_t,a_t)}}V^*_{t+1}(\phi_t(\theta_t,a_t,O_{t+1},S^o_{t+1})).
\end{align}
Since $(O_{t+1},S^o_{t+1}) = \tilde h_{t+1}(S_{t},A_t,N_{t+1})$ as in the proof of Lemma \ref{lm:piupdate}, the above equation can be further expressed as
\begin{align}
& \sup_{{Q_{t+1}}|{(q_t,a_t)}}V^*_{t+1}(Q_{t+1}) \notag\\
= &\sup_{{(S_t,N_{t+1} )}|{(q_t,a_t)}}V^*_{t+1}(\phi_t(\theta_t,a_t,\tilde h_{t+1}(S_{t},a_t,N_{t+1})))
\notag\\
= &\sup_{s_t \in \theta_t,n_{t+1} \in \rg{N_{t+1}}}V^*_{t+1}(\phi_t(\theta_t,a_t,\tilde h_{t+1}(s_{t},a_t,n_{t+1})))
\end{align}
where the last equality follows from Property \ref{prop:cond_indep} since $\theta_t=\rgc{S_t}{q_t,a_t}$ depends on the realization of $Q_t,A_t$ and $N_{t+1}$ is unrelated to all variables before $t+1$.
Therefore, the value function at $t$ is equal to
\begin{align}
 V^*_t(q_t) = &\inf_{a_t}
\Big\{
 \max\Big(\sup_{s_t \in \theta_t}\rho_t(s_t,a_t), \notag\\
 &\quad \sup_{s_t \in \theta_t,n_{t+1} \in \rg{N_{t+1}}}V^*_{t+1}(\phi_t(\theta_t,a_t,h_{t+1}(s_{t},a_t,n_{t+1})))\Big)
\Big\}
\notag\\
=: &  V^*_t(\theta_t)
\end{align}
which finishes the proof of the claim. 
It is straightforward to see that a strategy achieving infimum at each stage will have a cost equal to $\sup_{{Q_1}} V^{*}_1(Q_1) =\sup_{q_1 \in \rg{Q_1}} V^*_1(\theta_1)$ where $\theta_1 = \rg{S_1|q_1}$. Hence the proof is complete.
\end{proof}

%

\section{Proof of Lemma \ref{lem:1}} \label{sec:lem_1}
Fix the estimator's strategy to some arbitrary $\mathbf{g}$. Define $S_t = (X_t,E_t,Y_{1:t-1})$. Then, 
\begin{align*}
S_{t+1} &=	\left(\begin{array}{c}
X_{t+1} \\
E_{t+1} \\
Y_{1:t}
\end{array}
\right)
=\left(\begin{array}{c}
\lambda A X_t + N_t \\
\max(E_t - U_t, 0) \\
Y_{1:t-1}, h(X_t,U_t)
\end{array}
\right) \nonumber  \\ 
&=: \tilde{F}_t (S_t,U_t,N_t) .					      						
\end{align*}
The instantaneous cost at time $t$ can be written as
\begin{align*}
C_t &= ||X_t - \hat{X}_t|| = ||X_t - g_t(Y_{1:t})|| \\
&= ||X_t - g_t(Y_{1:t-1},Y_t)|| = ||X_t - g_t(Y_{1:t-1}, h(X_t,U_t))|| \\
& =: \rho(S_t,U_t).
\end{align*}
The problem of optimizing the transmission strategy is now an instance of the  centralized minimax control problem discussed in Section \ref{sec:central} with  $S_t$ as the directly observable state and $U_t$ as the action. Since there is no hidden state for the transmitter, the optimal transmission strategy at time $t$ is a function of the current state $S_t$. 

Since the above argument holds for any arbitrary estimation strategy $\mathbf{g}$, it holds true for an optimal estimation strategy as well. Therefore, it is sufficient to consider transmission strategies of the form $U_t =f_t(S_t)=f_t(X_t,E_t,Y_{1:t-1})$. Moreover, since $E_t$ can be inferred from $Y_{1:t-1}$, we can further restrict transmission strategies to the form $U_t = f_t(X_t,Y_{1:t-1})$ without any loss in performance.

\section{Proof of lemmas \ref{lem:diam} and \ref{lem:prop Q}}\label{sec:lem_prop_Q}
\subsection*{Proof of Lemma \ref{lem:diam}}
The proof is trivial if $\lambda =0$, so we will focus on the case of $\lambda \neq 0$. For a set $S$ and $x \in \mathbb{R}^n$, define $r(S,x) := \sup_{z \in S} ||z -x||$. For a fixed $x$, we can write,
\begin{align}
&r(\phi_t(\pi),x) = \sup_{z \in \phi_t(\pi)} || z - x || = \sup_{y \in \pi, ||w|| \leq a_{t+1}} ||\lambda A y + w -x ||\label{r1}  \\
&  \leq \sup_{y \in \pi} ||\lambda A y -x || + \sup_{||w|| \leq a_{t+1}} ||w||  \nonumber \\
& = |\lambda|  \sup_{y \in \pi} ||A(y - \frac{A^{-1} x}{\lambda})|| + a_{t+1} \nonumber  \\
& = |\lambda|  \sup_{y \in \pi} ||y - \tilde{x}|| + a_{t+1},  \, \text{where} \, \tilde{x} = \frac{A^{-1}x}{\lambda}, \nonumber \\
& =  |\lambda| r(\pi,\tilde{x}) + a_{t+1} \label{r2}
\end{align}
where we used the fact that for any vector $u$, $||A u|| = ||u||$ since $A$ is an orthogonal matrix. 

Let $\epsilon >0$. Then, $\exists y_{\epsilon} \in \pi$ such that $|| y_{\epsilon} - \tilde{x} || > r (\pi, \tilde{x}) - \frac{\epsilon}{|\lambda|}$. Taking $w = a_{t+1} \frac{A(y_{\epsilon} - \tilde{x})}{|| y_{\epsilon} - \tilde{x} ||} sign(\lambda)$ and $y = y_{\epsilon}$ we get  
\begin{align}
\sup_{y \in \pi, ||w|| \leq a_{t+1}}  & ||\lambda A y + w -x || \geq || \lambda A y_{\epsilon} - x + a_{t+1} \frac{A(y_{\epsilon} - \tilde{x})}{|| y_{\epsilon} - \tilde{x} ||} sign(\lambda) || \nonumber \\
& = |\lambda| ||  y_{\epsilon} - \tilde{x} + \frac{a_{t+1}}{|\lambda|} \frac{y_{\epsilon} - \tilde{x}}{|| y_{\epsilon} - \tilde{x} ||}  || \nonumber \\
&=  |\lambda| || y_{\epsilon} - \tilde{x}  || + a_{t+1}  > |\lambda| r(\pi,\tilde{x}) + a_{t+1} - \epsilon \label{r_ineq}
\end{align}
Since $\epsilon$ is arbitrary \eqref{r1} and \eqref{r_ineq} implies,
\begin{equation}\label{r_ineq2}
\implies  r (\phi_t(\pi),x) \geq  |\lambda| r(\pi,\tilde{x}) + a_{t+1} 
\end{equation}
Using \eqref{r2} and \eqref{r_ineq2} we get $ r (\phi_t(\pi),x)=  |\lambda| r(\pi,\tilde{x}) + a_{t+1}$. Thus,
\begin{equation}\label{r_phi}
r^*(\phi_t(\pi)) = \inf_x ( |\lambda| r(\pi,\frac{A^{-1}x}{\lambda}) + a_{t+1}) = |\lambda| r^*(\pi) + a_{t+1}
\end{equation}
where the second equality follows since $\frac{1}{\lambda} A^{-1}$ is invertible.

\subsection*{Proof of Lemma \ref{lem:prop Q}}

1) We start by showing that the lemma is true for $t=T$. Note that $V_T (\pi,\tilde{e}) = r^*(\pi)$ by definition of $r^*(\pi)$ and $V_T (\pi,\tilde{e})$. Therefore, it follows trivially that $V_T$ satisfies property $\mathbf{Q}$.

Now, consider two sets $\theta$ and $\tilde{\theta}$ such that $\theta \mathbf{Q} \tilde{\theta}$. At $t= T$, observe that if $e_T > 0$, the prescription $\gamma^{all}$ achieves the infimum in \eqref{w_t} and the corresponding infimum value is zero. Thus, $W_T(\theta, e) = W_T(\tilde{\theta}, e) =  0, \forall e>0$.  If $e_T = 0$ then the only possible choice of $\gamma_T$ is $\gamma^{none}$. Observe from \eqref{psi} that $\psi(\theta,\gamma^{none},\epsilon) = \theta$, thus it follows that $W_T(\theta, 0) = V_T(\theta,0)$ from \eqref{w_t}. Since $\theta \mathbf{Q} \tilde{\theta}$ and $V_T$ satisfies property $\mathbf{Q}$, we have $W_T(\theta,0) = W_T(\tilde{\theta},0)$. Thus, $W_T$ satisfies property $\mathbf{Q}$. \\

2) We now proceed by induction to prove that the lemma is true for $t<T$. We first show that if $W_{t+1}$ satisfies property $\mathbf{Q}$, then so does $V_t$. \eqref{v_t} can be simplified to the following:
\begin{equation}\label{v_t_multidim_simplified}
V_t (\pi_t, \tilde{e}_t) =  \max \lbrace r^*(\pi_t), W_{t+1}(\phi_t(\pi_t),\tilde{e}_t)) \rbrace,
\end{equation}
where $r^*(\pi_t) = \inf_{\hat{x_t}\in \mathbb{R}} \sup_{x_t \in \pi_t} ||x_t - \hat{x}_t||$. Let $\pi,\tilde{\pi}$ be two sets such that $\pi \mathbf{Q} \tilde{\pi}$. Then, $r^*(\pi_t)= r^*(\tilde{\pi}_t)$. Hence, the first term inside the maximization in \eqref{v_t_multidim_simplified} is the same for $\pi$ and $\tilde{\pi}$. \\

It follows from Lemma \ref{lem:diam} that if $\pi \mathbf{Q} \tilde{\pi}$ then $\phi_t(\pi) \mathbf{Q} \phi_t(\tilde{\pi})$. Then, $W_{t+1}(\phi_t(\pi),e_t) = W_{t+1}(\phi_t(\tilde{\pi}),e_t)$ follows using the induction hypothesis.
Thus, both the terms in the maximization in \eqref{v_t_multidim_simplified} satisfy property $\mathbf{Q}$. Therefore, $V_t$ also satisfies property $\mathbf{Q}$. \\

Next, we show that if $V_t$ satisfies property $\mathbf{Q}$ then so does $W_t$. 
Observe that if $\{x_1\}$ and $\{x_2\}$ are two singleton sets then $V_t(\{x_1\},e) = V_t(\{x_2\},e)$ since $V_t$ satisfies property $\mathbf{Q}$. Thus, we may write 
\begin{equation}\label{eq5}
V_t(\{x\},e) = K_t(e) \quad \forall x \in \mathcal{X}. \notag
\end{equation}

Let $e_t > 0$. Define $W_t^{\gamma} (\theta_t, e_t)$ for a given prescription $\gamma$ as follows:
\begin{equation}\label{eq3}
W_t^{\gamma} (\theta_t, e_t) =  \sup_{x_t \in \theta_t} V_t(\psi(\theta_t, \gamma,y_t), e_t - \gamma(x_t)).
\end{equation} 
Then, $W_t(\theta_t,e_t) = \displaystyle\inf_{\gamma} W_t^{\gamma}(\theta_t,e_t)$. 
For any prescription $\gamma$, let ${A_{\gamma,\theta_t} := \{x \in \theta_t: \gamma(x) = 0\} }$ be the set of the state values in $\theta_t$ which are mapped to the control action $0$.  If $A_{\gamma,\theta_t} = \emptyset$, then 
\begin{equation*}
W_t^{\gamma} (\theta_t, e_t) = K_t(e_t-1).
\end{equation*}
If $\theta_t \text{\textbackslash} A_{\gamma,\theta_t} = \emptyset$, then
\begin{equation*}
W_t^{\gamma} (\theta_t, e_t) = V_t(\theta_t,e_t).
\end{equation*}
If neither $A_{\gamma,\theta_t}$ or $\theta_t \text{\textbackslash} A_{\gamma,\theta_t}$ is empty, then 
\begin{equation*}\label{eq4}
\resizebox{0.45\textwidth}{!}{$W_t^{\gamma} (\theta_t, e_t) = \max \{ V_t (A_{\gamma,\theta_t},e_t), \displaystyle\sup_{x \in \theta_t \text{\textbackslash} A_{\gamma,\theta_t}} V_t (\{ x \}, e_t -1) \}$}
\end{equation*}
\begin{align*}
 &= \max \{ V_t (A_{\gamma,\theta_t},e_t), K_t(e_t -1) \}  \geq K_t(e_t-1).
\end{align*}
Also, it is easy to see that for the prescriptions $\gamma^{all}$ and $\gamma^{none}$ we have $W_t^{\gamma^{all}} (\theta_t, e_t) = K_t(e_t-1)$ and $W_t^{\gamma^{none}} (\theta_t, e_t) =V_t(\theta_t,e_t)$ respectively. 
Thus, it is clear that 
\begin{align}
W_t(\theta_t,e_t) &= \displaystyle\inf_{\gamma} W_t^{\gamma}(\theta_t,e_t) \notag \\
&=\min \{ K_t(e_t -1), V_t(\theta_t)\} \notag \\
&= \min\{ W_t^{\gamma^{all}}(\theta_t,e_t), W_t^{\gamma^{none}}(\theta_t,e_t) \}.
\end{align}
Thus, either $\gamma^{all}$ or $\gamma^{none}$ is an optimal prescription at time $t$. \\

Now, if $\theta \mathbf{Q} \tilde{\theta}$, then it follows from the induction hypothesis that $W_t(\theta,e_t) = \min \{V_t(\theta,e_t), K_t(e_t-1)\} = \min \{V_t(\tilde{\theta},e_t), K_t(e_t-1)\} = W_t(\tilde\theta,e_t) $. Similar arguments can be made if $e_t=0$. Therefore, $W_t$ satisfies property $\mathbf{Q}$. 

Thus, by induction, $V_t$ and $W_t$ satisfy property $\mathbf{Q}$ for all $t=1,2,\ldots,T$.

\section{}\label{sec:main_theorem}
\subsection*{Proof of Lemma \ref{lem4}}

We first show that the post-transmission conditional range $\Pi_t$ is a ball centered around $\tilde{X}_t$ under a globally optimal prescription strategy. This can be done by a simple induction argument: At $t=1$ one of the following two will happen
\begin{enumerate}
\item If $\gamma_1 = \gamma^{all}$, then $\tilde{X}_1 = X_1$ and $\Pi_1 = \{X_1\}$.
\item If $\gamma_1 = \gamma^{none}$, then $\tilde{X}_1 = 0$ and $\Pi_1 = \{x_1 : ||x_1|| \leq a_1\}$.
\end{enumerate}
Hence, the claim is true for $t=1$. Let the claim be true for $t$. Then, at time $t+1$ one of the following will happen,
\begin{enumerate}
\item If $\gamma_{t+1} = \gamma^{all}$, then  $\tilde{X}_{t+1} = X_{t+1}$ and $\Pi_{t+1} = \{X_{t+1}\}$. 
\item If $\gamma_{t+1} = \gamma^{none}$, then  $\tilde{X}_{t+1} = \lambda A \tilde{X}_t$. In this case, $\Pi_{t+1} = \Theta_{t+1} = \{x_{t+1} : x_{t+1} = \lambda A x_t + n_{t+1} , \, x_t \in \Pi_t, || n_{t+1} || \leq a_{t+1}\}$ i.e. $\Pi_{t+1}$ is obtained by rotating $\Pi_t$ using $A$, scaling it by $\lambda$ and then adding it to a ball centered around origin of radius $a_{t+1}$. Using the induction hypothesis that $\Pi_t$ is a ball centered at $\tilde{X}_t$, it follows that $\Pi_{t+1}$ is a ball centered at $\tilde{X}_{t+1} = \lambda A \tilde{X}_t$.  
\end{enumerate}

Thus, $\Pi_t$ is a ball centered around $\tilde{X}_t$ for all $t$. Therefore, the infimum in \eqref{v_t} will be achieved by $\tilde{X}_t$. Hence, $\tilde{X}_t$ is the optimal esimate at time $t$.

\subsection*{Proof of Theorem \ref{thm:main}}

We will argue that the strategies $\mathbf{f}^*,\mathbf{g^*}$ achieve the globally optimal cost for Problem \ref{prob:sensor}. Denote the $K$ time instants\footnote{If $t_i=t_{i+1}$ for some $i$, the controller chooses control action $1$ fewer than $K$ times.} with $U^{d*}_t$ equal to $1$ by  $1 \leq t_1 \leq \ldots t_K \leq T$ with the convention that $t_{K+1}= T+1, t_0 = 0$ and $X_0^d = 0$. 

Now, in Problem \ref{prob:deterministic}, if $t_{i}+1<t_{i+1}$, the state grows in the interval $[t_i+1,t_{i+1}-1]$ for all $i$ and in the interval $[t_K+1,T]$ if $t_K<T$. Therefore,  
\begin{equation}\label{eq:cost}
 J^d(U^{d*}_{1:T}) := \max_{t \in \mathcal{T}} \rho(X^d_t,U^{d*}_t) =  \max_{0 \leq i \leq K} X^d_{t_{i+1} -1}\mathbb{I}_{t_i+1<t_{i+1}}
\end{equation}

Using \eqref{eq:cost} and the state dynamics we can write 
\begin{align}
J^d(U^{d*}_{1:T}) = \max_{0 \leq i \leq K} & \left( \sum_{j=t_i+1}^{t_{i+1} -1} |\lambda|^{t_{i+1}-1-j} a_{ j } \right) \,  \mathbb{I}_{t_i+1<t_{i+1}} \label{T_term}
\end{align}
Now, consider the worst case instantaneous cost in Problem \ref{prob:sensor} under the strategy $\mathbf{f}^*,\mathbf{g^*}$. First consider the interval $[1,t_1]$. If $t_1=1$ then the estimation error is $0$ in this interval. When $t_1>1$, let $1 \leq t < t_1$, then $\hat{X}_t = 0$ under $\mathbf{g^*}$. Then at time $t$, the worst case estimation error is $\sup_{N_{1:t}} ||\sum_{j=0}^{t-1} \lambda^j A^j N_{t-j} || = \sum_{j=1}^t |\lambda|^{t-j} a_j$. Hence, the worst case estimation error in $[1,t_1]$ is $ \left(\sum_{j=1}^{t_1-1} |\lambda|^{t_1-1-j} a_j \right) \mathbb{I}_{1<t_{1}}$. Repeating this argument we get that the worst case estimation error in the interval $[t_i+1,t_{i+1}-1]$ is $ \left(\sum_{j=t_i+1}^{t_{i+1} -1} |\lambda|^{t_{i+1}-1-j} a_{ j } \right) \, \mathbb{I}_{t_i+1<t_{i+1}}$. The cost incurred by the pair $\mathbf{f}^*,\mathbf{g^*}$ is the maximum of the worst case estimation error in each interval and thus $J(\mathbf{f}^*,\mathbf{g^*}) = J^d(U^{d*}_{1:T})$ using \eqref{T_term}. 
Now, since $U^{d*}_{1:T}$ is the optimal open loop sequence it must achieve the optimal cost for Problem \ref{prob:deterministic} which is the same as the optimal cost for Problem \ref{prob:sensor} from Lemma \ref{lem:cost_equal}. Therefore, $(\mathbf{f}^*, \mathbf{g}^*)$ is globally optimal.

\section{Proof of Lemma \ref{lem:uniform_transmit}}\label{sec:lem_uniform_transmit}
Consider some open loop sequence $U^{d}_t$ and let the $K$ time instants with $U^{d}_t$ equal to $1$ be denoted by $1 \leq t_1 \leq \ldots t_K \leq T$ with the convention that $t_{K+1}= T+1, t_0 = 0$. 
Define $y_i = t_i - t_{i-1}$ for $1 \leq i \leq K+1$. We refer to $\{y_i\}_{1 \leq i \leq K+1}$ as the partition of the time horizon.
Then, $\sum_{i=1}^{K+1} y_i = T+1$. Since $K<T$, $t_{i}+1<t_{i+1}$ will hold for some $i$. Then, using the proof of Theorem \ref{thm:main}, observe that the cost incurred for a partition $\{y_i\}$ would be $(\max_i \frac{|\lambda|^{y_i -1} -1 }{|\lambda| -1}) a = (\frac{|\lambda|^{\max_i y_i -1} -1 }{|\lambda| -1}) a$ when $|\lambda| \neq 1$. We will show that $\max_i y_i$ is at least $\Delta$ for any partition. 
We first consider the case when $ \frac{T+1}{K+1} $ is not an integer. Suppose $\displaystyle\max_i y_i < \left\lceil \frac{T+1}{K+1} \right\rceil$, then  $y_i \leq \left\lfloor \frac{T+1}{K+1} \right\rfloor \, \forall i $
\begin{align}
\implies \sum_{i=1}^{K+1} y_i &\leq (K+1) \left\lfloor \frac{T+1}{K+1} \right\rfloor < T+1. \label{eq:y_i sum}
\end{align}
 \eqref{eq:y_i sum} gives a contradiction since $\sum_{i=1}^{K+1} y_i = T+1$. For the case when $ \frac{T+1}{K+1} $ is an integer, a similar contradiction can be obtained by noting that $y_i \leq  \frac{T+1}{K+1} -1 \, \forall i $. Thus, $\displaystyle\max_i y_i \geq  \left\lceil \frac{T+1}{K+1} \right\rceil = \Delta.$

Now, consider the strategy where $U_t^d = 1$ when $t = m\Delta$ for some $m \in \{1,\ldots,K\}$. Note that $l \Delta \leq T < (l+1) \Delta$ for some $1 \leq l \leq K$. It is easy to check that $\max_{i} y _i = \Delta$ for this strategy and hence it achieves the optimal cost. The proof for the case when $|\lambda| = 1$ can be easily obtained in a similar manner.

\end{document}